\preprint{AIP/123-QED}
\newcommand{\noun}[1]{\textsc{#1}}
\numberwithin{equation}{section}
\theoremstyle{remark}
      \newtheorem{rem}{\protect\remarkname}
      \newtheorem{rem}{\protect\remarkname}[chapter]
\theoremstyle{plain}
	    \newtheorem{thm}{\protect\theoremname}
      \newtheorem{thm}{\protect\theoremname}[chapter]
\theoremstyle{definition}
      \newtheorem{example}{\protect\examplename}
      \newtheorem{example}{\protect\examplename}[chapter]
\theoremstyle{definition}
      \newtheorem{definition}{\protect\definitionname}
      \newtheorem{definition}{\protect\definitionname}[chapter]
\DeclareMathOperator{\EX}{\mathbb{E}}
\providecommand{\examplename}{Example}
\providecommand{\remarkname}{Remark}
\providecommand{\theoremname}{Theorem}
\providecommand{\definitionname}{Definition}
\begin{document}
\title{Universal Upper Estimate for Prediction Errors under Moderate Model
Uncertainty}
\author{Bálint Kaszás}
\author{George Haller}\altaffiliation{Author to whom correspondence should be addressed:\\ georgehaller@ethz.ch}
\affiliation{Institute for Mechanical Systems, ETH Zürich,
Leonhardstrasse 21, 8092 Zürich, Switzerland}
\begin{abstract}
We derive universal upper estimates for model-prediction error under
moderate but otherwise unknown model uncertainty. Our estimates give
upper bounds on the leading order trajectory-uncertainty arising along model trajectories, solely as functions of the invariants of the known Cauchy-Green strain tensor of the model. Our bounds turn out to be optimal, which means that they cannot be improved for general systems. The quantity relating the leading-order trajectory-uncertainty to the model uncertainty is the Model Sensitivity, which we find to be a useful tool for a quick global assessment of the impact of modeling uncertainties in various domains of the phase space. Examining the expectation that Finite-Time Lyapunov Exponents capture sensitivity to modeling errors, we show that this does not generally follow. {However, we find that certain important features of the FTLE persist in the MS field.}
\end{abstract}

\maketitle
\begin{quotation}
We present a method of sensitivity analysis for general dynamical systems, subjected to deterministic or stochastic modeling uncertainty. Using the properties of the unperturbed dynamics, we derive a universal bound for the leading-order prediction error. This bound motivates the definition of the Model Sensitivity, a scalar quantity, depending on the initial condition and time. We demonstrate, using nonlinear numerical models that the Model Sensitivity provides both a global view over the phase space of the dynamical system and in some situations, a localized, time-dependent predictor of uncertainties along trajectories. We find that the phase-space structure of the Model Sensitivity (MS) is related, but not identical to that of the Finite-Time Lyapunov Exponents (FTLE). We formulate conditions under which robust features of the FTLE field are expected to also be seen in the MS field.
\end{quotation}
\section{\label{sec:Introduction}Introduction}

{One of the challenges in predicting and describing real-world phenomena
is uncertainties that enter the modeling process}.
Depending on the context, these can arise as a result of incomplete
or noisy data, uncertainty in the mathematical model, or even the
error introduced by numerical algorithms. Here we seek to bound the impact of these uncertainties on specific model trajectories utilizing minimal information on the modeling errors but substantial information on the internal dynamics of the known model. 

A range of approaches exist to assess the impact of model uncertainty. One such approach, response theory, originates
from statistical physics, where a central question was an equilibrium
system's response to infinitesimal perturbations. Under a (possibly
time-dependent) perturbation to a Hamiltonian system, Kubo's formula \citep{Kubo} establishes a link between the expected
value of the linear-order response and certain quantities of the unperturbed
system. This linear response theory was generalized to systems with uniform hyperbolicity. With this assumption, Ruelle's work put the theory on a rigorous
foundation, providing formulas for the asymptotic expansion of the invariant measure\cite{Ruelle2009a} of the perturbed system. {The results have been sharpened by the use of transfer operators\cite{butterley2007} and have been expanded to stochastic dynamical systems \cite{hairer2010}.}
Numerical
evidence shows \citep{Abramov2007,Majda2010,Lucarini2018} that linear
response can be observed even in systems that are not strictly uniformly
hyperbolic. In particular, in
the field of climate science, response theory has been successfully
used to assess the various possible scenarios of anthropogenic climate
change\citep{leith75,Gritsun2017,Lembo2020,bodai2020}. 

For general dynamical systems, an additional source of uncertainty
is also present: sensitivity to initial conditions. This means that a small error in the system's
initial condition grows exponentially, governed by the largest Lyapunov
exponent \citep{Nese1989a,Ginelli2013a}. A common illustration of this phenomenon is weather prediction, in which long
term predictions are impossible due to the exponential error-growth.
{Considerable effort has gone into assessing these difficulties,
for example, by using
ensemble methods\cite{leith75, Hawkins2009, maher2019, tel2019} to obtain a statistical characterization. Data assimilation\citep{Ott2004,Kalnay2012,Grudzien2018}, where model prediction is compared regularly to real observations, is also a prominent method of quantifying errors. It is also possible to quantify modeling errors, for example, by either introducing perturbations to both the background state and the observations\cite{cardinali2014} or by  adding a physically justified, stochastic forcing term to the model\cite{piccolo2016}.}

Another important question is the sensitivity
of model predictions to slight changes in the model parameters\cite{Laughton1964,Cacuci}. This sensitivity is often characterized
by the derivatives of an observable (a function of the model variables) with respect to those parameters\cite{Lea2000,Lea2002, Thuburn2005, Wang2013b, Wang2014}. {In general, sensitivity analysis seeks to assess prediction errors under modeling uncertainty. For this problem, response theory could also be employed\cite{ghil2020} successfully. Alternatively, one could also use finite differencing to approximate the derivative. However, to reduce the computational cost, the sensitivity is often computed
from the linearized dynamics (tangent method)\cite{Wang2013b,Wang2014} along a reference trajectory. }

Usually, the observed quantity is an infinite-time average computed along trajectories. Direct calculations need to utilize sufficiently long Monte Carlo simulations of the full model and finite difference approximations for the derivative.
In this case, the tangent method generally results in asymptotically unbounded sensitivities, which do not match the bounded ones computed directly\cite{Lea2000, Lea2002}. As noted in Ref. \onlinecite{Thuburn2005}, the issue comes from exchanging two limits: the sensitivity of an infinite-time average is the derivative of the infinite-time average, while the tangent method calculates the infinite-time average of a derivative. 

It has been suspected that similarly to sensitivity
with respect to initial conditions, sensitivity with respect to parameters is also
governed by the largest Lyapunov exponent of the underlying trajectory. This is supported
by numerical results\citep{Lea2000}, but can also be intuitively understood: the differential equation that describes the growth of perturbations
to initial conditions has the same homogeneous part as the one describing
error growth due to parameter changes \citep{Wang2013b,Wang2014}. {A connection between the two types of sensitivities has also been noted in the context of perturbation bounds of Markov chains\cite{mitrophanov2003, mitrophanov2006}. }
 
To circumvent this problem of unbounded averages, the ensemble method calculates the sensitivity over shorter time intervals for several randomly selected trajectories using the linearized dynamics. Then, the true sensitivity of the infinite-time average can be approximated by the ensemble average \citep{Lea2000, Lea2002}.

For ergodic systems, the least-squares shadowing method\citep{Wang2013b,Wang2014,Ni2016,Lasagna2019b} offers an alternative calculation of the true parameter sensitivity of an infinite-time average.
Instead of solving the tangent equation, the method looks for a nearby shadowing
trajectory which has a uniformly bounded distance from the reference
trajectory. Practically, this means that a nonlinear optimization
problem has to be solved. Solving the linearized version of this problem, it is possible to obtain meaningful
sensitivities\citep{Wang2014}, even for chaotic systems \citep{Wang2013b}.
The method was also implemented in turbulent fluid dynamical simulations\citep{Ni2016}.
Further improvements on calculating sensitivities for chaotic systems take advantage of unstable periodic orbits \citep{Lasagna2019b}.

In contrast to the methods mentioned above, we focus here on finite-time predictions and their uncertainties, as opposed to infinite-time averages. This is motivated by the fact that certain models may not be defined for infinite times, or the infinite time averages may not be accurate representations of the system\cite{tel2019}. 

{We derive universal bounds on the uncertainties in model predictions under small modeling errors. Our estimates only assume the knowledge of a general bound on the model errors, yet yield trajectory-specific bounds for the model-prediction errors.} These bounds provide a granular assessment of the impact of modeling errors, depending only on the known local dynamics of the phase space in the absence of model uncertainties. 
We relate the arising model sensitivities to the Finite Time Lyapunov Exponents (FTLE)\cite{Mathur2007} and their ridges\footnote{There exist various definitions of ridges in the literature\cite{eberly1994, lindeberg1998}, which are in general non-equivalent\cite{peikert2008}. Loosely speaking, we refer to locally maximizing hypersurfaces of the scalar field as ridges. To express this idea in more precise terms, we prefer to use the ridge definition in Ref. \onlinecite{Karrasch2013}, where a ridge is defined to be a structurally stable, attracting invariant manifold of the gradient-field with a structurally stable invariant boundary. This definition has the advantage that it is robust with respect to small perturbations of the scalar field. For the precise formulation, we refer to Definition 1.} , and hence to Lagrangian Coherent
Structures (LCS)\cite{Haller2015}, which are organizing structures in the idealized
model's phase space. {The model sensitivities are captured by a time-dependent scalar field, analogous to the FTLE.  We find that the ridges of the FTLE do not necessarily signal the presence of a ridge in the scalar field characterizing model sensitivity.} However, we formulate a plausible condition under which a correspondence is expected.

We also extend the analysis to cases
when both deterministic and stochastic uncertainties are present. We show that assuming
multiplicative Gaussian noise, the expectation value of the observation
error can be bounded by an asymptotic formula, analogous to the purely
deterministic case. All these estimates even turn out to be optimal: we give examples in which the inequalities become equalities. In addition, through numerical examples of models that represent
various levels of complexity, we show that the bounds developed for
the observation error hold for surprisingly large modeling uncertainties too. 

\section{\label{sec:Set-up}Set-up}

We first consider a parametrized family of deterministic differential equations
\begin{equation}
\dot{x}=f_{0}(x,t)+\varepsilon g(x,t,\varepsilon)\quad x\in U\subset\mathbb{R}^{n},\quad t\in[t_{0},t_{1}],\quad0\leq\epsilon\ll1\label{eq:startingeq2}
\end{equation}
where both $f_{0}$ and $g$ are assumed to be smooth functions of their arguments.
Trajectories of this equation are of the form $x(t;t_{0},x_{0},\varepsilon)$,
which are as smooth in their arguments as $f$ is. We can think of
$\varepsilon g(x,t;\varepsilon)$ as a family of perturbations representing errors
to a known model system
\begin{equation}
\dot{x}=f_{0}(x,t),\label{eq:model}
\end{equation}
our ``best understanding'' of the given problem. The perturbations of the form $\varepsilon g(x,t;\varepsilon)$ represent the modeling uncertainty of the underlying problem,
such as a systematic bias with spatial and temporal dependence. We assume that this term is bounded in norm. 

We are interested in how trajectories change under changes in the
parameter $\varepsilon.$ While the exact nature of the family
$\varepsilon g(x,t;\varepsilon)$ is generally unknown for $\varepsilon>0$, we still seek to assess the leading-order uncertainty of trajectories in case an overall bound on $\varepsilon |g(x,t;\varepsilon)|$ is available. We call this leading-order uncertainty the model-sensitivity of the trajectory with respect to the parameter
$\varepsilon$. 

We will show that even for completely general
systems, there exists a bound on the leading-order uncertainty, which can, in practice,
be even used to bound the proper trajectory uncertainty.

Next, we will assume stochastic model uncertainty by adding a white-noise-driven stochastic process in the perturbation
to the known vector field $f_{0}.$ This translates into a stochastic
differential equation of the form 
\begin{align}
\label{eq:startSDE}
    dx_{t}=f_{0}(x_{t},t)\,dt+\varepsilon g(x_{t},t,\varepsilon)dt+\varepsilon\sigma(x_{t,}t)\,\mathrm{d}W_{t},\\
    \quad x\in U\subset\mathbb{R}^{n},\quad t\in[t_{0},t_{1}],\quad0\leq\varepsilon\ll1. \nonumber
\end{align}
The SDE is understood in terms of the It\^{o} interpretation, where $W_t$ is an $n$-dimensional Wiener-process and $\sigma(x_t,t)$ is the covariance matrix.
The coefficient functions in \eqref{eq:startSDE} will be assumed to satisfy additional assumptions
that guarantee the existence of a strong solution to the equation. These types of stochastic
perturbations represent either random errors in the model or
unresolved effects. 

\section{\label{sec:deterministicSens}Deterministic Model sensitivity}

Traditionally, model sensitivity refers to the change in an observable under changes in the model equations, usually through parameters. 
Here, we take the observable to be simply the model trajectory itself.

\subsection{Influence of deterministic uncertainty}

Let us first assume that \eqref{eq:startingeq2} holds, generating a \emph{flow-map }
\begin{align}
F_{t_{0}}^{t} & :U\to U,\nonumber\\
x_{0} & \mapsto x(t,t_{0},x_{0};\varepsilon).
\end{align}

To assess the effect of a slight change in $\varepsilon$, we consider the norm of the difference between the idealized model trajectory
\begin{equation}
\label{eq:def1}
    x^0(t):= x(t,t_0,x_0;0)
\end{equation}
and the real one (with $\varepsilon\neq 0$) 
\begin{equation}
\label{eq:def2}
        x^\varepsilon(t):= x(t,t_0,x_0;\varepsilon)
\end{equation}
starting from the same initial condition. 
This \emph{trajectory uncertainty} is given by
\begin{equation}
    |x(t,t_0,x_0;0) - x(t,t_0,x_0;\varepsilon)|= |x^0(t) - x^\varepsilon(t)|.
\end{equation}

By classic results on ordinary differential equations, the flow map $F_{t_{0}}^{t}$
is as smooth in the parameter $\varepsilon$ as is the vector field $f_0 + \varepsilon g$,
and hence can also be Taylor-expanded in $\varepsilon$. This gives the \emph{leading-order trajectory uncertainty} as

\begin{equation}
 \varepsilon \left \vert \frac{\partial x^\varepsilon(t)}{\partial \varepsilon}\right \vert_{\varepsilon=0}  = \varepsilon |\eta(t,t_0,x_0)|.    
\end{equation}

The vector $\eta$, which is the derivative of the flow-map with respect to $\varepsilon$, obeys the (inhomogeneous) equation of variations\cite{Arnold1992}, also called the tangent model\cite{Cacuci} of \ref{eq:startingeq2}: 
\begin{align}
    \dot{\eta}&=\nabla f_{0}\left(x^0(t)\right)\eta+g\left(x^0(t),t;0\right), \nonumber\\
 \eta(t_{0};t_{0},x_{0})&=0.\label{eq:parameter eq of vari}
\end{align}
 The solution of this initial value problem is

\begin{equation}
    \eta(t;t_{0},x_{0})=\int_{t_{0}}^{t}\phi_{s}^{t}\left(x^0(s)\right)g\left(x^0(s),s;0\right)\,ds,
\end{equation}

where the deformation gradient, $\phi_{t_{0}}^{t}(x_{0})=\nabla F_{t_{0}}^{t}(x_{0})$,
is the normalized fundamental matrix solution of the equation of variations
\begin{equation}
\dot{\eta}=\nabla f_{0}\left(x^0(t),t\right)\eta,\label{eq:eq of variations}
\end{equation}
i.e., the homogeneous part of the linear system of ordinary differential
equations \eqref{eq:parameter eq of vari}. 

Therefore, the leading-order change to a trajectory $x^0(t)$
due to changes in the model is 

\begin{align}
\varepsilon\left|\eta(t;t_{0},x_{0})\right| & \nonumber =\varepsilon\left|\int_{t_{0}}^{t}\phi_{s}^{t}\left(x^0(s)\right)g\left(x^0(s),s;0\right)\,ds\right|\nonumber.\\
\end{align}

This quantity can be bounded from above as
\begin{align}
     \varepsilon\left|\eta(t;t_{0},x_{0})\right|& \leq\varepsilon\int_{t_{0}}^{t}\left|\phi_{s}^{t}\left(x^0(s)\right)g\left(x^0(s),s;0\right)\right|\,ds\nonumber \\
 & \leq\int_{t_{0}}^{t}\left\Vert \phi_{s}^{t}\left(x^0(s)\right)\right\Vert \left|\varepsilon g\left(x^0(s),s;0\right)\right|\,ds\nonumber \\
 & \leq\varepsilon\int_{t_{0}}^{t}\left\Vert \phi_{s}^{t}\left(x^0(s)\right)\right\Vert \,ds\left\Vert g\left(x^0(s),s;0\right)\right\Vert _{\infty}\nonumber \\
 & \leq\varepsilon\int_{t_{0}}^{t}\sqrt{\Lambda_{s}^{t}\left(x^0(s)\right)}\,ds\left\Vert g\left(x^0(s),s;0\right)\right\Vert _{\infty},\label{eq:eta estimate-1}
\end{align}
where $||\cdot||_\infty$ refers to the supremum norm and $\Lambda_{s}^{t}\left(x^0(s)\right)$ denotes the largest eigenvalue
of the (right) Cauchy\textendash Green strain tensor $C_{s}^{t}\left(x^0(s)\right)=\left[\phi_{s}^{t}\left(x^0(s)\right)\right]^{T}\phi_{s}^{t}\left(x^0(s)\right)$.
In other words, $\sqrt{\Lambda_{s}^{t}\left(x^0(s)\right)}$ is the largest singular
value of $\phi_{s}^{t}\left(x^0(s)\right)$. 

Let
\begin{align}
    \Delta_{\infty}(x_{0},t):&=\varepsilon\left\Vert g\left(x^0(\,\cdot\,),\,\cdot\,;0\right)\right\Vert _{\infty}= \\ \nonumber
    &=\varepsilon\max_{s\in\left[t_{0},t\right]}\left|g\left(x^0(s),s;0\right)\right|\label{eq:modelUncert}
\end{align}
\emph{denote the maximal leading-order model uncertainty }\cite{Kalnay2012} along the
trajectory $x^0(t)$ of the idealized model \eqref{eq:model}. With
this notation, let us define the \emph{leading-order trajectory uncertainty
}at any time instant $t\in[t_{0},t_{1}]$ as 

\begin{equation}
\delta(x_{0},t):=\varepsilon\left|\eta(t;t_{0},x_{0})\right|\leq\int_{t_{0}}^{t}\sqrt{\Lambda_{s}^{t}\left(x^0(s)\right)}\,ds\,\Delta_{\infty}(x_{0},t).\label{eq:sensitivitybound}
\end{equation}

For any finite $k\in\mathbb{N}^{+}$,
we also define the corresponding \emph{time averaged leading-order
trajectory uncertainty} as the temporal $L^{k}$ norm of $\delta(x_0,t)$:

\begin{equation}
    \delta_{k}(x_{0}):=\left\Vert \delta(x_0,t)\right\Vert _{L^{k}}=\varepsilon\sqrt[k]{\int_{t_{0}}^{t_{1}}\left[\eta(t;t_{0},x_{0})\right]^{k}dt}.
\end{equation}

 To obtain a uniform bound for $\delta_k(x_0)$ over the time interval $[t_0,t_1]$, we can simply
let $k\to \infty$ and find that

\begin{equation}
\delta_{\infty}(x_{0}):=\left\Vert\delta(x_0,t)\right\Vert _{\infty}=\varepsilon\max_{t\in\left[t_{0},t_{1}\right]}\left|\eta(t,t_{0},x_{0};0)\right|.
\end{equation}

Similarly, for the maximal leading-order model uncertainty, we can
set 
\begin{equation}
    \Delta_{\infty}(x_{0}):=\varepsilon\max_{s\in\left[t_{0},t\right]}\Delta_{\infty}(x_{0},s)=\varepsilon\max_{s\in\left[t_{0},t_{1}\right]}\left|g\left(x^0(s),s;0\right)\right|.
\end{equation}
Then, by Eq. \eqref{eq:eta estimate-1}, the trajectory uncertainty
$\delta_{k}(x_{0})$ can be estimated from above as
\begin{align}
\nonumber \delta_{k}(x_{0}) & =\varepsilon\sqrt[k]{\int_{t_{0}}^{t_{1}}\left[\eta(t;t_{0},x_{0})\right]^{k}dt}\\
 & \leq\Delta_{\infty}(x_{0},t)\left\Vert \int_{t_{0}}^{t}\sqrt{\Lambda_{s}^{t}\left(x^0(s)\right)}\,ds\right\Vert _{L^{k}}.
\end{align}

Taking the supremum norm of both sides gives 
\begin{equation}
\delta_{\infty}(x_{0})\leq\Delta_{\infty}(x_{0},t_{1})\max_{t\in\left[t_{0},t_{1}\right]}\int_{t_{0}}^{t}\sqrt{\Lambda_{s}^{t}\left(x^0(s)\right)}\,ds.\label{eq:L infinity estimate}
\end{equation}

Note  that, the upper bound $\int_{t_{0}}^{t}\sqrt{\Lambda_{s}^{t}\left(x^0(s)\right)}\,ds\,\Delta_{\infty}(x_{0},t)$
is generally not a monotone function of $t$. This implies that in order to
evaluate the right hand side of \eqref{eq:L infinity estimate}, one needs to compute the integral involved for all time instants in $[t_0,t_1]$.

{In conclusion, the estimate \eqref{eq:sensitivitybound} shows that the
leading-order trajectory uncertainty under modeling errors can be estimated from above
by a product of two quantities. One of these, $\Delta_\infty$, is a measure of the overall size of the model uncertainty, while the other factor is related to the sensitivity with respect to initial conditions \emph{within}
the idealized model. }
\begin{rem}
Recall that sensitivity with respect to initial conditions over a
time interval $[s,t]$ is typically characterized by the finite-time
Lyapunov exponent (or FTLE)\cite{Mathur2007}, given by
\begin{equation}
\label{eq:ftledef}
\mathrm{FTLE}{}_{s}^{t}\left(x^0(s)\right)=\frac{1}{t-s}\log\sqrt{\Lambda_{s}^{t}\left(x^0(s)\right)}.
\end{equation}

With this in mind, our uncertainty estimate can be rewritten as a
functional of the FTLE field as follows:
\begin{align}
\delta(x_{0},t) & \leq\Delta_{\infty}(x_{0},t)\int_{t_{0}}^{t}\exp\left[(t-s)\mathrm{FTLE}{}_{s}^{t}\left(x^0(s)\right)\right]\,ds.\label{eq:L 2 estimate-1}
\end{align}
\end{rem}

\begin{rem}
The calculation of the integral in \eqref{eq:sensitivitybound} can also be done in backward time, which is sometimes more convenient. Following the results on the smallest eigenvalue of the Cauchy-Green strain tensor\cite{Haller2011}, we note that
\begin{equation}
    \sqrt{\Lambda_{s}^{t}\left(x^0(s)\right)} = \frac{1}{\sqrt{\lambda_{\min}\left[C_{t}^{s}\left(x^0(t)\right)\right]}}
\end{equation}
and hence
\begin{equation}
\label{eq:backward}
    \int_{t_0}^t\sqrt{\Lambda_{s}^{t}\left(x^0(s)\right)}ds = \left \vert \int_{t}^{t_0}\frac{1}{\sqrt{\lambda_{\min}\left[C_{t}^{s}\left(x^0\right)\right]}}\right \vert.
\end{equation}

Numerically, formula \eqref{eq:backward} requires the evaluation of the integral of the square-root of the largest eigenvalue of backward-time Cauchy–Green strain tensor $C_{t}^{s}\left(x^0(t)\right)$, computed over $\left[t,s\right]$, with s decreasing from $t$ to $t_{0}$. This can be computed by finite-differencing along backward-time trajectories starting from a regular grid at time $t$, back to time $t_{0}$. The advantage of this approach is that the Cauchy-Green strain tensor is always calculated for the same initial point during integration. However, this point is the time-$t$ position of the idealized model trajectory. To obtain the bound as a function of the time-$t_0$ position, one needs to map the values back from time $t$ to time $t_0$ with the idealized flow-map. 

\end{rem}
With the above estimates, we can now bound the leading-order trajectory uncertainty
of the dynamical system. Most methods currently available for calculating sensitivity measures
need additional assumptions, such as the existence of an invariant measure\cite{Ruelle2009a}, ergodicity\cite{Wang2013b, Wang2014} or a specific form
of the modeling errors (to run direct simulations). While these often give
precise predictions on the value of the sensitivity, they are not
applicable to typical dynamical systems. In contrast,
the inequality \eqref{eq:sensitivitybound} holds for \emph{all} dynamical systems of the form \eqref{eq:startingeq2}. In addition, we can also use it to formulate
a bound on the proper (not only leading order) uncertainty in the dynamical system's trajectories.
\begin{thm}[Universal bound on trajectory uncertainty]
 Consider the dynamical system defined over a finite time interval
$[t_{0},t_{1}],$ and on a compact domain $U\subset\mathbb{R}^{n},$
by \eqref{eq:startingeq2}. Denote by $x^0(t)$
the idealized model's solution \eqref{eq:def1}, 
starting from $x_{0}$ at $t_{0}.$ Similarly, let $x^\varepsilon(t)$
be the true solution \eqref{eq:def2}, belonging to an arbitrary $\varepsilon\ne0$,
starting from the same initial condition.
Then, for any $\delta>0$ small enough, there exists $\varepsilon_{0}>0$,
such that for $\varepsilon<\varepsilon_{0}$ the following inequality holds for all $t\in[t_{0},t_{1}]$ and $x_0\in U$
\end{thm}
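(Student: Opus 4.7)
The plan is to upgrade the leading-order bound \eqref{eq:sensitivitybound} to a bound on the true trajectory uncertainty $|x^\varepsilon(t)-x^0(t)|$, with a $(1+\delta)$ slack factor that absorbs the higher-order remainder of the Taylor expansion in $\varepsilon$.

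First I would exploit the smoothness of $f_0$ and $g$ together with the compactness of $U$ and $[t_0,t_1]$. Classical results on smooth dependence of ODE solutions on parameters (Arnold, cited above for the equation of variations) guarantee that the flow map $F_{t_0}^t$ is $C^2$ in $\varepsilon$, uniformly in $(t,x_0)\in[t_0,t_1]\times U$. Writing a second-order Taylor expansion in $\varepsilon$ around $\varepsilon=0$,
\begin{equation}
x^\varepsilon(t)=x^0(t)+\varepsilon\,\eta(t;t_0,x_0)+\varepsilon^2 R(t;t_0,x_0,\varepsilon),
\end{equation}
the remainder $R$ admits the integral-form representation $R=\int_0^1(1-s)\partial_\varepsilon^2 x^{s\varepsilon}(t)\,ds$, which by continuity of $\partial_\varepsilon^2 x^\varepsilon$ on the compact set $[t_0,t_1]\times\overline{U}\times[0,\varepsilon_*]$ is bounded by some constant $M$ independent of $(t,x_0,\varepsilon)$. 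Thus
\begin{equation}
|x^\varepsilon(t)-x^0(t)|\le \varepsilon|\eta(t;t_0,x_0)|+\varepsilon^2 M.
\end{equation}

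Second, I would apply the already established leading-order estimate \eqref{eq:sensitivitybound}, which gives
\begin{equation}
\varepsilon|\eta(t;t_0,x_0)|\le \Delta_\infty(x_0,t)\int_{t_0}^{t}\sqrt{\Lambda_s^t(x^0(s))}\,ds =: B(x_0,t).
\end{equation}
For any fixed $\delta>0$ the remainder can be absorbed into a multiplicative slack provided $\varepsilon^2 M\le \delta\,B(x_0,t)$ uniformly. One subtlety is that $B(x_0,t)$ can degenerate (e.g.\ at $t=t_0$, or where $g$ vanishes along the trajectory), so strict uniformity requires either restricting to $t\ge t_0+\tau$ for some small $\tau>0$, or formulating the bound as
\begin{equation}
|x^\varepsilon(t)-x^0(t)|\le (1+\delta)\,\Delta_\infty(x_0,t)\int_{t_0}^{t}\sqrt{\Lambda_s^t(x^0(s))}\,ds,
\end{equation}
which is the natural form compatible with the wording of the theorem. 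Choosing $\varepsilon_0 = \delta \inf_{x_0,t} B(x_0,t)/M$ (with the generic non-degeneracy assumption on $g$), or more robustly choosing $\varepsilon_0$ so that $\varepsilon M \le \delta\,\|\eta\|$ uniformly by compactness arguments, yields the claim for every $\varepsilon<\varepsilon_0$.

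The main obstacle is precisely this uniformity issue: the $O(\varepsilon^2)$ remainder must be dominated by the $O(\varepsilon)$ leading term uniformly in $(x_0,t)$, but the leading term can be arbitrarily small where $\eta$ vanishes, while $M$ is only bounded above. I would handle this by either (i) stating the bound globally with an additive $O(\varepsilon^2)$ error that is then reabsorbed into $(1+\delta)$ away from the degeneracy set, or (ii) invoking that the inequality is vacuous where both sides are of order $\varepsilon^2$, so one loses nothing by restricting to points where $B(x_0,t)\ge c\varepsilon$ for a fixed small $c$. Either way, the proof reduces to the combination of smooth parameter-dependence, the Taylor remainder estimate on a compact set, and the already-derived leading-order inequality \eqref{eq:sensitivitybound}.
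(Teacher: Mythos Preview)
Your overall strategy---Taylor expansion in $\varepsilon$, a uniform bound $M\varepsilon^2$ on the remainder via compactness, and insertion of the leading-order estimate \eqref{eq:sensitivitybound}---is exactly the paper's approach. The difficulty you wrestle with, however, is self-inflicted: you have guessed the wrong form for the target inequality. The actual bound \eqref{eq:uncertbound} is
\[
|x^\varepsilon(t)-x^{0}(t)|\leq\left(\int_{t_{0}}^{t}\sqrt{\Lambda_{s}^{t}\left(x^0(s)\right)}\,ds\,+\delta\right)\Delta_{\infty}(x_{0},t),
\]
with $\delta$ \emph{additive} inside the parenthesis, not a multiplicative $(1+\delta)$ slack on $B(x_0,t)$. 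With this form the paper simply writes $M\varepsilon^2=\Delta_\infty(x_0,t)\cdot\frac{M\varepsilon^2}{\Delta_\infty(x_0,t)}$ and requires $\frac{M\varepsilon^2}{\Delta_\infty(x_0,t)}\le\delta$. Since $\Delta_\infty(x_0,t)=\varepsilon\max_{s}|g(x^0(s),s;0)|$ already carries a factor of $\varepsilon$, this reduces to an $\varepsilon$-smallness condition independent of the integral $\int\sqrt{\Lambda}\,ds$, and the degeneracy at $t=t_0$ is harmless (both sides vanish there).

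So your ``main obstacle''---uniformly dominating $M\varepsilon^2$ by $\delta B(x_0,t)$ when $B$ can vanish---is an artifact of the multiplicative reformulation; it does not arise in the paper's additive version, and no restriction to $t\ge t_0+\tau$ or non-degeneracy of $g$ is needed. Once you adopt the additive $\delta$, your argument and the paper's coincide.
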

\begin{equation}
|x^\varepsilon(t)-x^{0}(t)|\leq\left(\int_{t_{0}}^{t}\sqrt{\Lambda_{s}^{t}\left(x^0(s)\right)}\,ds\,+\delta\right)\Delta_{\infty}(x_{0},t).\label{eq:uncertbound}
\end{equation}
\begin{proof}
See Appendix \ref{sec:app1}.
\end{proof}
Our Theorem 1 provides a bound for small values of $\varepsilon$,
that is computable numerically and holds for any time instant in
the time interval and any initial condition in the domain. At first,
the dependence on a finite $\delta$ may seem problematic. However, we note that for small enough $\varepsilon$, the size-$\delta$ correction can be made arbitrarily small. Our numerical findings indicate that \eqref{eq:uncertbound} tends to be satisfied even for $\delta = 0$. 

The inequality \eqref{eq:uncertbound} gives an upper bound for the
maximal possible error between the idealized model solution and the real
one. Available bounds in the literature\cite{Brauer1966,Kirchgraber1976} either require knowledge of the perturbed trajectory itself or introduce Gronwall-type estimates that vastly overestimate
the error, due to their universality in space and time.

For example, assume that in
system \eqref{eq:startingeq2}, $f_{0}$ satisfies the Lipschitz condition
with Lipschitz constant $L$ and the perturbation $\varepsilon g(x,t)$
is uniformly bounded by a constant $M=\max\Delta_{\infty}.$ We then
obtain $|x^\varepsilon(t)-x^{0}(t)|\leq\frac{M}{L}(e^{L(t-t_{0})}-1)$ from the classic Gronwall-lemma \cite{GuckenheimerHolmes}. This
is a rigorous but highly conservative upper bound on the trajectory uncertainty, as seen from a direct comparison with \eqref{eq:uncertbound}.

To illustrate the difference between the two estimates, consider the classic damped-forced Duffing-oscillator
\begin{align}
\dot{x} & =y,\label{eq:duffing}\\
\dot{y} & =x-x^{3}-\delta y+A\cos t, \nonumber 
\end{align}

with $\delta=0.15$ and $A=0.3$. {For these parameter values, the system is chaotic. As such, it is reasonable to expect high sensitivity to modeling errors. It is also known that the system has a chaotic global attractor\cite{Hadjighasem2013}, contained in the region $U=[-1.5,1.5]\times[-1.5,1.5]$, which is an invariant set of the stroboscopic map\cite{GuckenheimerHolmes} of period $2\pi$}. 
Using this fact, we choose the global Lipschitz-constant $L=2$ over this bounded domain for the idealized system.
We consider a model error term of the form $\varepsilon g(x,t)=(0,\varepsilon\sin(\omega_{p}t))$,
representing a high-frequency deterministic perturbation
to system \eqref{eq:duffing}. For this choice of perturbation,
we can select the uniform bound $M=\varepsilon$ for the model error.

\begin{figure}[h]
\includegraphics[width=0.49\textwidth]{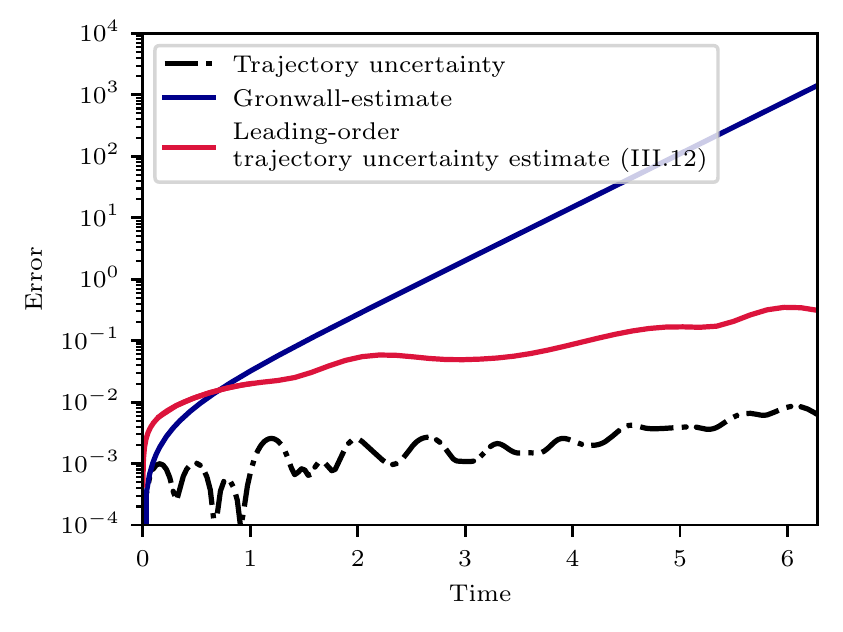}\caption{\label{fig:fig1}Comparison of uncertainty estimates, applied to system
\ref{eq:duffing}. The dashed-dotted curve shows the phase-space distance
between the idealized model solution and the real
one, started from the same
initial condition $(0.15,0.4)$, with $\varepsilon=0.01$ and $\omega_p=10$. The red
curve is the bound obtained from the leading-order bound in inequality
\eqref{eq:sensitivitybound}. The blue curve is the Gronwall-type, rigorous
upper bound, defined by $\frac{M}{L}(e^{Lt}-1)$, with $M=\varepsilon$
and $L=2$. }
\end{figure}

As seen in Fig \ref{fig:fig1}, both the Gronwall-type estimate and the leading-order bound of \eqref{eq:sensitivitybound} substantially overestimate 
the actual distance between the true and the idealized model trajectories.
However, while the Gronwall-estimate suggests an overall exponential increase for all trajectories starting in $U$, our leading-order bound \eqref{eq:leadbound} depends on the unperturbed trajectory, providing a tighter estimate on the trajectory uncertainty. 

\section{\noun{\label{sec:Stochastic-error}Stochastic model sensitivity}}

It is often reasonable to assume a stochastic
model error as one of the sources of uncertainty in the model. In that case, trajectories obey the following stochastic differential equation (SDE) 
\begin{equation}
    \label{eq:whitenoise}
    \dot{x}=f_0(x,t)+\varepsilon g(x,t,\varepsilon) + \varepsilon\sigma(x,t)\xi(t).
\end{equation}

The uncertainty comes from a white-noise process $\xi$ and $f_0,g, \sigma$
are smooth functions. Equation \eqref{eq:whitenoise} can be interpreted in the It{\^o}-sense
as
\begin{equation}
dx_{t}=f_0(x_{t},t)\,dt + \varepsilon g(x_t, t, \varepsilon)\, dt+\varepsilon\sigma(x_{t},t)\,dW_{t},\label{eq:SDE}
\end{equation}
on a probability space $(\Omega,\mathcal{F},P)$, with $W_{t}$ being
an $n$-dimensional Wiener process, $f:\mathbb{R}^{n}\times[t_{0},t_{1}]\times\mathbb{R}\to\mathbb{R}^{n}$
is the deterministic part of the SDE, and $\sigma(\cdot,\cdot):\mathbb{R}^{n}\times[t_{0},t_{1}]\to\mathbb{R}^{n\times n}$
is the covariance matrix of the noise. Both $f$ and $\sigma$ are
assumed to be measurable, smooth functions of their arguments.

Our goal is to characterize the leading-order deviation of the solution process
$x_{t}$ of (\ref{eq:SDE}) from the solution of the idealized model ($\varepsilon=0$). Note that the idealized model dynamics is given by the ODE (\ref{eq:model}),
for all realizations $\omega\in\Omega$ of the noise. To achieve such
a characterization, we develop an upper estimate similar to (\ref{eq:L infinity estimate}).
We first state the necessary and sufficient conditions for the existence
of a solution process $x_{t}$, derive the
SDE governing the leading-order trajectory uncertainty
(a stochastic analog to the equation of variations), and give bounds
on the expected value of the norm of its solutions. 

Assume that there exist constants $ C,D  >0$, such that for all $x,y\in\mathbb{R}^{n}, \ t\in[t_0,t_1]$ and small enough $\varepsilon>0$, we have
\begin{align}
  |f_0(x,t) + \varepsilon g(x,t,\varepsilon)|+|\varepsilon\sigma(x,t)|&\leq C(1+|x|),\nonumber\\
  |f_0(x,t) + \varepsilon g(x,t,\varepsilon)-f_0(y,t) - \varepsilon g(y,t,\varepsilon)| \nonumber\\
 +|\varepsilon\sigma(x,t)-\varepsilon\sigma(y,t)|&\leq D|x-y|.
\end{align}

Then, Equation (\ref{eq:SDE}) along with the deterministic initial
condition $x_{t=t_{0}}=x_{0}$ has a unique solution $x_{t}$ which
is adapted to the filtration generated by $W_{s}$ for $s\leq t$.
In addition, $E\left(\int_{t_{0}}^{t_{1}}|x_{t}|^{2}dt\right)<\infty$
holds and the sample paths of the solution $x_{t}(\omega)$ are continuous \cite{Oksendal2010}. The following theorem provides an analogue of the equation of variations \eqref{eq:eq of variations} in the stochastic setting.
\begin{thm}[Small noise expansion]
Assume that the coefficients in (\ref{eq:SDE}) have bounded and measurable
partial derivatives up to second order. Then, there exists $\bar{\varepsilon}>0$, such that for $\varepsilon<\bar{\varepsilon}$ the solution $x^\varepsilon_{t}$ can
be written as 
\begin{equation}
\label{eq:smallnoise}
    x^\varepsilon_{t}=x^0_t+\varepsilon\eta_{t}+ \varepsilon^2 R_{2}(t,\varepsilon),
\end{equation}
with the same notation as we had in (\ref{eq:parameter eq of vari}), but now with $\eta_{t}$ denoting a stochastic process. The
remainder term, $R_{2}(t,\varepsilon)$, is bounded in the mean-squared sense, i.e., there exists
$K>0$, such that
\begin{equation}
    \sup_{t\in[t_0,t_1]}\left[ \EX \left\vert R_2(t,\varepsilon)\right \vert^2\right]\leq K.
\end{equation}

The coefficients $x^0_t$ and $\eta_{t}$ satisfy the
system of stochastic differential equations
\begin{align}
\label{eq:smallnoise1}
dx^0_t =&f_{0}\left(x^0_t,t\right)dt,\quad x^0_{t=t_0}=x_{0},\\ 
d\eta_{t} =&\nabla f_{0}\left(x^0_t,t\right)\eta_{t}dt\nonumber\\
 & +g\left(x^0_t,t;0\right)dt+\sigma\left(x^0_t,t\right)dW_{t},\nonumber\\ 
  \eta_{t=t_{0}}=&0\label{eq:smallnoise2}. 
\end{align}
\end{thm}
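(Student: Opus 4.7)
The plan is to carry out a regular perturbation expansion of $x^\varepsilon_t$ in $\varepsilon$, analogous to a small-noise / Freidlin--Wentzell expansion, but adapted to the present combination of deterministic bias and multiplicative noise. First I would take $x^0_t$ to be the solution of the idealized ODE \eqref{eq:smallnoise1} and $\eta_t$ to be the unique $L^2$ solution of the affine linear SDE \eqref{eq:smallnoise2}; existence and uniform second-moment bounds for $\eta_t$ are standard, because $\nabla f_0(x^0_t,t)$, the deterministic forcing $g(x^0_t, t; 0)$, and the noise coefficient $\sigma(x^0_t, t)$ are bounded and continuous along the deterministic trajectory on the compact interval $[t_0,t_1]$. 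With these in hand, I define the residual
\[
R_2(t,\varepsilon) := \varepsilon^{-2}\bigl(x^\varepsilon_t - x^0_t - \varepsilon\eta_t\bigr),
\]
so that the decomposition \eqref{eq:smallnoise} holds by construction and the theorem reduces to bounding $\EX|R_2(t,\varepsilon)|^2$ uniformly on $[t_0, t_1]$.

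The next step is the a priori estimate
\[
\sup_{t_0 \le t \le t_1} \EX\,|x^\varepsilon_t - x^0_t|^2 \le C_1\,\varepsilon^2,
\]
obtained by subtracting \eqref{eq:smallnoise1} from \eqref{eq:SDE} in integral form, applying Cauchy--Schwarz on the drift integral and the It\^o isometry on the stochastic integral, invoking the Lipschitz bound $D$ on $f_0+\varepsilon g$ and $\varepsilon\sigma$, and closing by Gronwall's inequality; the explicit $\varepsilon^2$ emerges because the difference between the two equations carries an overall $\varepsilon$ factor. With this in hand, I derive the SDE for $\varepsilon^2 R_2$ by Taylor expanding $f_0(x^\varepsilon_t,t)$ around $x^0_t$ (using the boundedness of $\nabla^2 f_0$), expanding $\varepsilon g(x^\varepsilon_t, t, \varepsilon)$ jointly in the state and in $\varepsilon$, and expanding $\varepsilon\sigma(x^\varepsilon_t, t)$ around $x^0_t$. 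The $O(\varepsilon)$ contributions cancel against those in \eqref{eq:smallnoise2} by the very design of $\eta_t$, leaving
\[
dR_2(t,\varepsilon) = \bigl[\nabla f_0(x^0_t, t)\,R_2(t,\varepsilon) + A^\varepsilon_t\bigr]dt + B^\varepsilon_t\,dW_t,\quad R_2(t_0,\varepsilon) = 0,
\]
where, by the a priori estimate and the assumed boundedness of the second partial derivatives of the coefficients, $A^\varepsilon_t$ and $B^\varepsilon_t$ are uniformly $L^2$-bounded in $\varepsilon$ and $t$.

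Closing the proof is then a standard $L^2$ energy argument: applying It\^o's formula to $|R_2(t,\varepsilon)|^2$ and taking expectations eliminates the martingale term; Cauchy--Schwarz bounds the drift cross term using $\|\nabla f_0\|_\infty$ along the deterministic trajectory and $\EX|A^\varepsilon_t|^2$; the It\^o isometry produces the quadratic variation contribution $\EX\|B^\varepsilon_t\|^2$; and Gronwall's inequality on the resulting integral inequality yields $\sup_t \EX|R_2(t,\varepsilon)|^2 \le K$ with $K$ independent of $\varepsilon$, which is the claim. The main technical obstacle I expect is the diffusion remainder $B^\varepsilon_t = \varepsilon^{-1}[\sigma(x^\varepsilon_t, t) - \sigma(x^0_t, t)]$: this factor appears singular at $\varepsilon = 0$, and the only clean way to show it remains uniformly $L^2$-bounded is to rewrite it in mean-value form as $\int_0^1 \nabla\sigma\bigl(x^0_t + r(x^\varepsilon_t - x^0_t), t\bigr)\,dr \cdot (x^\varepsilon_t - x^0_t)/\varepsilon$ and then invoke the a priori $O(\varepsilon)$ estimate on $|x^\varepsilon_t - x^0_t|$; the analogous bookkeeping must be done for the $\varepsilon g$ remainder so that every $O(\varepsilon^2)$ contribution is genuinely controllable in $L^2$ uniformly on $[t_0, t_1]$.
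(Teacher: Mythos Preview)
Your approach is genuinely different from the paper's. The paper does not prove the expansion from scratch: it invokes the existing small-noise expansion for \emph{autonomous} SDEs (Blagoveshchenskii, Freidlin--Wentzell, and in particular the $n$-dimensional version in Albeverio--Smii) and reduces the present nonautonomous, $\varepsilon$-dependent equation to that setting by augmenting the state to $X_t=(x_t,t,\varepsilon)\in\mathbb{R}^n\times[t_0,t_1]\times[0,\bar\varepsilon]$ with the trivial dynamics $\dot t=1$, $\dot\varepsilon=0$. The cited theorem then yields the expansion for $X_t$, and projecting onto the first $n$ components gives \eqref{eq:smallnoise1}--\eqref{eq:smallnoise2}. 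This buys brevity and modularity; your direct route is more self-contained and shows explicitly where each assumption is used.

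There is one genuine gap in your outline. The second-order Taylor remainder of $f_0$ contributes to $A^\varepsilon_t$ a term of size $\varepsilon^{-2}\|\nabla^2 f_0\|_\infty\,|x^\varepsilon_t-x^0_t|^2$, so bounding $\EX|A^\varepsilon_t|^2$ requires control of $\varepsilon^{-4}\EX|x^\varepsilon_t-x^0_t|^4$. Your stated a~priori estimate $\sup_t\EX|x^\varepsilon_t-x^0_t|^2\le C_1\varepsilon^2$ does not give this; you need the fourth-moment companion
\[
\sup_{t_0\le t\le t_1}\EX\,|x^\varepsilon_t-x^0_t|^4\le C_2\,\varepsilon^4,
\]
which follows by the same subtraction argument once you replace the It\^o isometry by the Burkholder--Davis--Gundy inequality and close with Gronwall. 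With that in hand, your mean-value treatment of $B^\varepsilon_t$ and the $g$-remainder is correct, and the energy/Gronwall step for $R_2$ goes through as written.
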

\begin{proof}
This result is the application of the \emph{small-noise expansion
}of stochastic differential equations \citep{blago1962,Friedlin2012,gardiner2004},
which is analogous to the equation of variations for
ordinary differential equations. The proof is essentially the extension
of the known result for the vector-valued, autonomous case \citep{Albeverio2015}, to also allow for nonautonomous and parameter-dependent SDE-s.
For details, see Appendix \ref{sec:app2}. 
\end{proof}
\begin{rem}
The zeroth-order SDE in $\varepsilon$, Eq. \eqref{eq:smallnoise1}, is precisely the idealized model. Hence, the solution process $x^0_t$ is deterministic and could be also written as $x^0_t \equiv x^0(t)$.
\end{rem}
\begin{thm}
Let $\phi_{t_{0}}^{t}(x_{0})$ be the normalized fundamental
matrix solution to (\ref{eq:eq of variations}). Then, $\eta_{t}$
defined as the solution to the linear SDE \eqref{eq:smallnoise2}, is an Ornstein-Uhlenbeck
process that can be written as
\begin{equation}
    \eta_{t}=\int_{t_{0}}^{t}\phi_{s}^{t}\left(x^0_s\right)g \left(x^0_s, s\right)\,ds+\int_{t_{0}}^{t}\phi_{s}^{t}\left(x^0_{s}\right)\sigma\left(x^0_s,s\right)\,dW_{s}.
    \label{eq:eta_stoch}
\end{equation}
\end{thm}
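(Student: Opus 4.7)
The plan is to treat the SDE \eqref{eq:smallnoise2} as a linear inhomogeneous SDE whose diffusion and drift inhomogeneity are driven by the \emph{deterministic} reference trajectory $x^0_t = x^0(t)$, and to reduce it to a standard variation-of-parameters identity. The key observation is that the coefficient matrix $A(t) := \nabla f_0(x^0(t),t)$ is a deterministic function of time, so the fundamental matrix $\phi_{t_0}^t(x_0)$ solving $\dot{\phi}_{t_0}^t = A(t)\,\phi_{t_0}^t$, $\phi_{t_0}^{t_0} = I$, is itself deterministic and absolutely continuous in $t$. Hence no It\^o correction will appear when multiplying the semimartingale $\eta_t$ by $\phi_{t_0}^t$ or its inverse.

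Concretely, I would introduce the auxiliary process $\zeta_t := (\phi_{t_0}^t)^{-1}\eta_t$ and apply the product rule (ordinary integration by parts in $t$, since $\phi_{t_0}^t$ is of bounded variation) to obtain
\begin{equation}
d\eta_t = \dot{\phi}_{t_0}^t\,\zeta_t\,dt + \phi_{t_0}^t\,d\zeta_t = A(t)\,\eta_t\,dt + \phi_{t_0}^t\,d\zeta_t.
\end{equation}
Matching this with \eqref{eq:smallnoise2} cancels the homogeneous part and leaves
\begin{equation}
d\zeta_t = (\phi_{t_0}^t)^{-1}\bigl[\,g(x^0(t),t;0)\,dt + \sigma(x^0(t),t)\,dW_t\,\bigr],
\end{equation}
with $\zeta_{t_0}=0$ (since $\eta_{t_0}=0$ and $\phi_{t_0}^{t_0}=I$). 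Integrating from $t_0$ to $t$ in the It\^o sense then gives an explicit formula for $\zeta_t$, and multiplying by $\phi_{t_0}^t$ yields $\eta_t$.

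To recover the stated form, I would invoke the cocycle property of the deterministic fundamental matrix, $\phi_{t_0}^t(\phi_{t_0}^s)^{-1} = \phi_s^t(x^0(s))$, which follows from the uniqueness of solutions to the linear ODE $\dot{\phi} = A(t)\phi$. Substituting this identity inside both the Lebesgue integral and the It\^o integral yields exactly
\begin{equation}
\eta_t = \int_{t_0}^t \phi_s^t(x^0_s)\,g(x^0_s,s;0)\,ds + \int_{t_0}^t \phi_s^t(x^0_s)\,\sigma(x^0_s,s)\,dW_s,
\end{equation}
which is the claimed expression. The fact that $\eta_t$ is an Ornstein--Uhlenbeck process follows immediately, since it solves a linear SDE with deterministic (time-dependent) drift matrix and deterministic diffusion coefficient, and the explicit representation shows it is Gaussian whenever $x_0$ is deterministic.

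The only mild obstacle is the measurability/integrability justification for moving $(\phi_{t_0}^t)^{-1}$ inside the It\^o integral and for writing the product rule in the form above; these are both standard because $\phi_{t_0}^t$ and its inverse are continuous and hence bounded on the compact interval $[t_0,t_1]$, while $\sigma(x^0(\cdot),\cdot)$ is bounded by assumption, so the integrand of the stochastic integral lies in $L^2([t_0,t_1]\times\Omega)$. Finally, one may verify a posteriori, using It\^o's formula on the right-hand side and the identity $\partial_t \phi_s^t = A(t)\phi_s^t$, that the proposed $\eta_t$ indeed satisfies \eqref{eq:smallnoise2}, which together with pathwise uniqueness for linear SDEs with bounded coefficients closes the argument.
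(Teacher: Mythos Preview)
Your proposal is correct and follows essentially the same variation-of-constants argument as the paper: the paper also writes $\eta_t=\varphi(t)x_t$ with $\varphi(t)$ the fundamental matrix of \eqref{eq:eq of variations}, cancels the homogeneous drift to obtain $dx_t=\varphi(t)^{-1}[g\,dt+\sigma\,dW_t]$, integrates, and then uses $\phi_s^t=\varphi(t)\varphi(s)^{-1}$ to reach \eqref{eq:eta_stoch}. Your additional remarks on integrability and the a posteriori verification are fine but go slightly beyond what the paper records.
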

\begin{proof}
This result is well-known for scalar stochastic differential equations. The extension to our multi-dimensional setting is given in Appendix \ref{sec:app3}. 
\end{proof}

Following this result, let $N(t)=||\eta_{t}||$ denote the norm of
the vector valued stochastic process $\eta_{t}$, which measures the
leading-order trajectory uncertainty arising from \emph{both deterministic and stochastic }modeling errors. The leading-order trajectory uncertainty
is then $\varepsilon N(t)$.  Using formula \eqref{eq:eta_stoch} for $\eta_{t}$, we can define the deterministic
term ($N_{d}$), the stochastic term $(N_{s})$ and the mixed term
($N_{m}$) of this leading-order trajectory uncertainty. To remain
consistent with the notation of Section \ref{sec:deterministicSens},
we have the deterministic term $\delta(x_{0},t)=\varepsilon N_{d}(t)$. The full expression for $N^2(t)$ is:
\begin{align}
 \label{eq:nsquared}
N(t)&^{2} \\
=&\left(\int_{t_{0}}^{t}\phi_{s}^{t}\left(x^0_{s}\right)g\left(x^0_s, s\right)\,ds+\int_{t_{0}}^{t}\phi_{s}^{t}\left(x^0_{s}\right)\sigma\left(x^0_{s},s\right)\,dW_{s}\right)^{2}\nonumber\\
 =&\left(\int_{t_{0}}^{t}\phi_{s}^{t}\left(x^0_{s}\right)g\left(x^0_s,s\right)\,ds\right)^{2}+\left(\int_{t_{0}}^{t}\phi_{s}^{t}\left(x^0_{s}\right)\sigma\left(x^0_{s},s\right)\,dW_{s}\right)^{2}\nonumber \\
 & +2\left(\int_{t_{0}}^{t}\phi_{s}^{t}\left(x^0_{s}\right)\sigma\left(x^0_{s},s\right)\,dW_{s}\right)\left(\int_{t_{0}}^{t}\phi_{s}^{t}\left(x^0_{s}\right)g\left(x^0_s,s\right)\,ds\right)\nonumber \\
 =&N_{d}(t)^{2}+N_{s}(t)^{2}+2N_{m}(t)\nonumber 
\end{align}

Formula \eqref{eq:nsquared} allows us to formulate a stochastic extension of Theorem 1, which applies
even in the stochastic setting. The quantity to be estimated is now the mean-square of the leading-order trajectory uncertainty. 
\begin{thm}[Bound on the mean-squared leading-order trajectory uncertainty]
The leading-order trajectory uncertainty can be bounded
in the mean-square sense as
\begin{align}
\label{eq:SDEbound}
\varepsilon^2 \EX\left[N(t)^{2}\right]\leq&\left(\int_{t_{0}}^{t}\sqrt{\Lambda_{s}^{t}\left(x^0_{s}\right)}\,ds\right)^{2}\,\Delta_{\infty}^{2}(x_{0},t) \\&+\int_{t_{0}}^{t}\mathrm{tr}\left[C_{s}^{t}\left(x^0_{s}\right)\right]\,ds\,\Delta_{\infty}^{\sigma}(x_{0},t),\nonumber     
\end{align}
where we have introduced the notation $\Delta_{\infty}^{\sigma}(x_{0},t)=\varepsilon^{2}\max_{s\in\left[t_{0},t\right]}\mathrm{tr}\left[\sigma\left(x^0_{s},s\right)^{T}\sigma\left(x^0_{s},s\right)\right]$.
\end{thm}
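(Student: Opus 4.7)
My plan is to start from the decomposition $N(t)^{2}=N_{d}(t)^{2}+N_{s}(t)^{2}+2N_{m}(t)$ given in Eq.~\eqref{eq:nsquared}, take expectations termwise, and bound each surviving contribution. Because the idealized trajectory $x_{s}^{0}$ is deterministic, so is $N_{d}(t)$, and hence $\EX[N_{d}(t)^{2}]=N_{d}(t)^{2}$. The mixed term $N_{m}(t)$ is the scalar product of a deterministic vector, $\int_{t_{0}}^{t}\phi_{s}^{t}(x_{s}^{0})\,g(x_{s}^{0},s)\,ds$, with the vector-valued It\^o integral $\int_{t_{0}}^{t}\phi_{s}^{t}(x_{s}^{0})\,\sigma(x_{s}^{0},s)\,dW_{s}$, whose integrand is deterministic and bounded on $[t_{0},t_{1}]$. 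Pulling the deterministic factor out of the expectation and using the zero-mean property of the It\^o integral gives $\EX[N_{m}(t)]=0$, so only the $N_{d}^{2}$ and $N_{s}^{2}$ terms remain.

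For the deterministic piece I would reapply the chain of inequalities in \eqref{eq:eta estimate-1}: pull the norm inside the integral, use submultiplicativity of the induced operator norm, and bound $\|\phi_{s}^{t}(x_{s}^{0})\|$ by $\sqrt{\Lambda_{s}^{t}(x_{s}^{0})}$. Factoring the supremum of $|g|$ out of the integral and multiplying by $\varepsilon^{2}$ yields the first term on the right-hand side of \eqref{eq:SDEbound}. For the stochastic piece, the vector-valued It\^o isometry gives $\EX[N_{s}(t)^{2}]=\int_{t_{0}}^{t}\mathrm{tr}\!\left[\phi_{s}^{t}\sigma\sigma^{T}(\phi_{s}^{t})^{T}\right]ds$, evaluated along $x_{s}^{0}$. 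The cyclic property of the trace converts the integrand to $\mathrm{tr}[C_{s}^{t}\,\sigma\sigma^{T}]$, and since both factors are symmetric positive semi-definite, the inequality $\mathrm{tr}(AB)\leq\mathrm{tr}(A)\,\mathrm{tr}(B)$ for PSD matrices (following from $\mathrm{tr}(AB)\leq\lambda_{\max}(A)\,\mathrm{tr}(B)$ combined with $\lambda_{\max}(A)\leq\mathrm{tr}(A)$) bounds it by $\mathrm{tr}[C_{s}^{t}]\cdot\mathrm{tr}[\sigma^{T}\sigma]$. Multiplying by $\varepsilon^{2}$ and pulling the temporal supremum of $\varepsilon^{2}\,\mathrm{tr}(\sigma^{T}\sigma)$ out of the integral produces the second term $\Delta_{\infty}^{\sigma}(x_{0},t)\int_{t_{0}}^{t}\mathrm{tr}[C_{s}^{t}(x_{s}^{0})]\,ds$, and summing the two bounds proves \eqref{eq:SDEbound}.

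The main subtlety is selecting the right form of the trace inequality so that the $\sigma$-dependence reduces to $\mathrm{tr}(\sigma^{T}\sigma)$ and matches the definition of $\Delta_{\infty}^{\sigma}$; using the sharper bound $\mathrm{tr}(AB)\leq\lambda_{\max}(A)\,\mathrm{tr}(B)$ would instead give an estimate involving $\Lambda_{s}^{t}$, which is not what the statement claims. A secondary point requiring care is the vanishing of $\EX[N_{m}(t)]$: since $N_{m}$ is a dot product, one must commute the deterministic factor past the expectation and verify square-integrability of the It\^o integrand on $[t_{0},t_{1}]$ before invoking its zero-mean property.
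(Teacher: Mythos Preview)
Your proposal is correct and follows essentially the same route as the paper: decompose $\EX[N^2]$, kill the mixed term via the zero-mean property of the It\^o integral, reuse the deterministic bound \eqref{eq:eta estimate-1} for $N_d^2$, and apply It\^o's isometry to $N_s^2$. The only cosmetic difference is in the last step: the paper writes $\EX[N_s^2]=\int \|\phi_s^t\sigma\|_F^2\,ds$ and invokes Frobenius-norm submultiplicativity $\|\phi\sigma\|_F\le\|\phi\|_F\|\sigma\|_F$, whereas you rewrite the integrand as $\mathrm{tr}(C_s^t\,\sigma\sigma^{T})$ and apply the PSD trace inequality $\mathrm{tr}(AB)\le\mathrm{tr}(A)\,\mathrm{tr}(B)$---but since $\|\phi\|_F^2=\mathrm{tr}(C_s^t)$ and $\|\sigma\|_F^2=\mathrm{tr}(\sigma^{T}\sigma)$, these two inequalities are literally the same statement.
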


\begin{proof}
The proof consists of a computation of the expected values of $N_{s}^{2}$
and $N_{m},$ since $N_{d}^{2}$ is purely deterministic and was already
computed before. The details of the proof are given in Appendix \ref{sec:app4}. 
\end{proof}

Note that if the model has no stochastic
error, i.e., $\sigma(x,t)\equiv0,$ Theorem 4 gives $N(t)=N_{d}$
and $\Delta_{\infty}^{\sigma}(x_{0},t)\equiv0$, yielding
the upper estimate $\varepsilon \EX \left[N(t)\right]=\varepsilon N(t)=\delta(x_{0},t)\leq\int_{t_{0}}^{t}\sqrt{\Lambda_{s}^{t}\left(x^0_{s}\right)}\,ds\,\Delta_{\infty}(x_{0},t)$.
This is consistent with the upper bound derived in Section \ref{sec:deterministicSens}.

Rearranging expression \eqref{eq:SDEbound}, we obtain a quantity, computed in terms of the \emph{idealized} model and the relative strength of errors (deterministic or stochastic). We refer to this quantity as \emph{Model Sensitivity (MS)}, defined as
\begin{equation}
\label{eq:MSdef}
    \text{MS}_{t_0}^t(x_0;r) := \left( \int_{t_0}^t \sqrt{\Lambda_s^t\left(x^0_s\right)}ds\right)^2 + r \int_{t_0}^t \text{tr}[C_s^t\left(x^0_s\right)]ds,
\end{equation}
where $r:= \Delta^\sigma_\infty(x_0,t)/\Delta^2_\infty(x_0,t)$ is the ratio characterizing the relative importance of the stochastic modeling errors. By calculating MS$_{t_0}^t$ for several initial conditions in a phase-space region of interest, we can quickly identify locations of high sensitivity to modeling errors. By Theorem 4, these locations are expected to show higher uncertainty. 

{We note that MS is a scalar-valued function of several variables: it depends on the phase-space location and the chosen time interval. Therefore, it does not give a {\em global} characterization of the model's sensitivity. Instead, we must view it as a time-dependent scalar field, which provides granular analysis of sensitivities. A similar assessment of sensitivities distributed over phase space was recently given by using Markov modeling in the context of response theory\cite{gutierrez2020}. That result focuses on infinite time intervals, which is not the case for our method.}

{Moreover, by Theorem 4, the leading order trajectory uncertainty is related to MS, in the mean-square sense, by
\begin{align}
    \label{eq:leadbound}
    \varepsilon^2\EX[N(t)^2] &\leq \text{MS}_{t_0}^t(x_0;r) \Delta^2_\infty(x_0,t), \text{ or equivalently, } \nonumber\\
    \text{MS}_{t_0}^t(x_0;r)&\geq \frac{\varepsilon^2\EX[N(t)^2]}{\Delta^2_\infty(x_0,t)}.
\end{align}
In other words, MS is the coefficient relating the leading order mean-squared trajectory uncertainty to the modeling uncertainty. }

As in the purely deterministic case, we obtain a theorem that relates $(\text{MS}_{t_0}^t)$ to the proper trajectory uncertainty. 

\begin{thm}[Universal bound on the mean-squared trajectory uncertainty]
Consider the stochastic dynamical system defined over a finite time interval $[t_0, t_1]$ and on a compact domain $U\subset \mathbb{R}^n$, by  the SDE \eqref{eq:SDE}. Then, for any $\delta>0$ there exists an $\varepsilon_0>0$, such that for $\varepsilon <\varepsilon_0$ the following inequality holds for all $t\in [t_0,t_1]$ and $x_0\in U$:
\begin{equation}
    \label{eq:theorem5}
    \sqrt{\EX\left (|x^\varepsilon_t - x^{0}(t)|^2\right)} \leq  \Delta_\infty(x_0,t)\left(\sqrt{\text{MS}_{t_0}^t(x_0,r)}+\delta\right).
\end{equation}
\end{thm}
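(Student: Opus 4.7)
\medskip

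\noindent\textbf{Proof proposal.} The plan is to mimic the strategy used for Theorem 1 in the deterministic case, but with the pointwise Euclidean norm replaced by the $L^2(\Omega)$ norm on the underlying probability space. The small-noise expansion of Theorem 2 provides the representation
\begin{equation*}
x^\varepsilon_t - x^0(t) = \varepsilon\, \eta_t + \varepsilon^2 R_2(t,\varepsilon),
\end{equation*}
in which $\eta_t$ is the Ornstein--Uhlenbeck process of Theorem 3 and the remainder $R_2$ satisfies $\sup_{t\in[t_0,t_1]}\EX|R_2(t,\varepsilon)|^2 \leq K$ uniformly in $\varepsilon$.

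Next, I apply the Minkowski inequality in $L^2(\Omega)$ to pass from the expansion to a bound on the square-root of the mean squared error:
\begin{equation*}
\sqrt{\EX\bigl(|x^\varepsilon_t-x^0(t)|^2\bigr)} \;\leq\; \varepsilon\sqrt{\EX\bigl[N(t)^2\bigr]} \;+\; \varepsilon^2\sqrt{\EX|R_2(t,\varepsilon)|^2}.
\end{equation*}
The first term is controlled directly by Theorem 4, which yields $\varepsilon\sqrt{\EX[N(t)^2]} \leq \Delta_\infty(x_0,t)\sqrt{\mathrm{MS}_{t_0}^t(x_0;r)}$. The second term is at most $\varepsilon^2\sqrt{K}$ by the uniform remainder bound above. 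Combining the two estimates gives
\begin{equation*}
\sqrt{\EX\bigl(|x^\varepsilon_t-x^0(t)|^2\bigr)} \;\leq\; \Delta_\infty(x_0,t)\sqrt{\mathrm{MS}_{t_0}^t(x_0;r)} \;+\; \varepsilon^2\sqrt{K}.
\end{equation*}

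Finally, I absorb the remainder into the $\delta$ term. Since $\Delta_\infty(x_0,t)=\varepsilon\max_{s}|g(x^0(s),s;0)|$ scales linearly in $\varepsilon$, the ratio $\varepsilon^2\sqrt{K}/\Delta_\infty(x_0,t)$ is of order $\varepsilon$, and by compactness of $U$ and $[t_0,t_1]$ plus the smoothness assumptions on $g$, this ratio can be made smaller than $\delta$ uniformly in $x_0\in U$ and $t\in[t_0,t_1]$ by choosing $\varepsilon<\varepsilon_0$ with $\varepsilon_0$ sufficiently small. This produces the claimed inequality.

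The main technical obstacle is ensuring that the remainder bound $K$ and the uniform-in-$(x_0,t)$ absorption step are both truly uniform over the compact domain and time interval. Verifying uniformity requires combining the smoothness hypotheses on $f_0,\,g,\,\sigma$ from Theorem 2 with standard mean-square continuous-dependence estimates for SDE solutions (e.g., Gronwall-type moment inequalities), so that the constant $K$ does not degenerate as the initial condition varies over $U$. Once this uniformity is in hand, the remaining steps are routine and the argument parallels the deterministic Appendix A proof of Theorem 1.
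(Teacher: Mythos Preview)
Your proposal is correct and follows essentially the same route as the paper's proof in Appendix~\ref{sec:app5}: the small-noise expansion of Theorem~2, the Minkowski inequality in $L^2(\Omega)$, Theorem~4 for the leading term, the uniform mean-square bound $K$ on the remainder, and the absorption of $\varepsilon^2\sqrt{K}/\Delta_\infty(x_0,t)$ into $\delta$ by compactness. The only point the paper adds beyond your outline is the explicit choice $\varepsilon_0=\min\bigl\{\max_{t,x_0}\sqrt{\Delta_\infty(x_0,t)\delta/K_0},\,\bar{\varepsilon}\bigr\}$, which you have implicitly captured in your final paragraph.
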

\begin{proof}
See Appendix \ref{sec:app5}.
\end{proof}
By Theorem 5, the bound on the mean-squared leading-order trajectory uncertainty is extended to the actual mean-squared trajectory uncertainty, for small enough $\varepsilon$. Then, the MS can be used to calculate a time-dependent upper bound on the trajectory uncertainty, which will be true for any perturbation of size $\Delta_\infty$, assuming a ratio of $r$ between stochastic and deterministic modeling errors.

We also note that in practice, the bound \eqref{eq:theorem5} tends to be satisfied even without including the size-$\delta$ correction (similarly to Theorem 1). This means that the much simpler expression of Theorem 4 can be used to assess the mean-squared trajectory uncertainty. In the next section, we demonstrate this fact on a few examples.  
\section{\label{sec:computation}Computation of trajectory uncertainty estimates}

We start by an explicit calculation of MS for linear systems. Within this class of systems, we can find examples proving the optimality of our estimates. 
Consider the constant coefficient linear
stochastic differential equation, driven by an $n-$dimensional Wiener-process
$\mathbf{W}_{t}$, 

\begin{align}
d\mathbf{x}_{t}  =\mathbf{A}\mathbf{x}_{t}\,dt+\varepsilon \mathbf{b}dt+\varepsilon\sigma\,d\mathbf{W}_{t},\qquad \mathbf{x},\mathbf{b}\in\mathbb{R}^{n},\quad  \mathbf{A}, \mathbf{\sigma}\in\mathbb{R}^{n\times n}.
\label{eq:linsde}
\end{align}

Here, $\mathbf{b}$ is a (constant) deterministic perturbation vector, $\mathbf{\sigma}$ is
the covariance matrix of the noise and $\varepsilon \geq 0$ controls the
size of the perturbation. 

To calculate the MS, we use formula \eqref{eq:MSdef}, with
\begin{equation}
    \Delta_\infty = \varepsilon|\mathbf{b}| \qquad \Delta^\sigma_\infty=\varepsilon^2 ||\sigma||^2_F.
\end{equation}
The equation of variations of system \eqref{eq:linsde} is simply $\dot{\phi}_{t_0}^t = \mathbf{A}\phi_{t_0}^t$, which gives $\phi^t_{t_0} = e^{\mathbf{A}(t-t_0)}$ for the flow-map gradient.
Then, by formula \eqref{eq:MSdef}, MS is
\begin{align}
\text{MS}_{t_0}^t = &\left(\int_{t_{0}}^{t}\sqrt{\Lambda\left[\left(e^{\mathbf{A}(t-s)}\right)^Te^{\mathbf{A}(t-s)}\right]}\,ds\right)^{2}\nonumber \\
&+\frac{||\sigma ||^2_F}{|\mathbf{b}|^2}\int_{t_{0}}^{t}\mathrm{tr}\left[\left(e^{\mathbf{A}(t-s)}\right)^T e^{\mathbf{A}(t-s)}\right]\,ds.     
\end{align}

From this, we can obtain the bound on the leading-order trajectory uncertainty after multiplying by $\varepsilon^2 |\mathbf{b}|^2$.

On the other hand, we can calculate the trajectory uncertainty directly.
The idealized system
(with $\varepsilon=0$) has the general solution $\mathbf{x}_{t}^{0}=e^{\mathbf{A}t}\mathbf{x}_{0}$,
while the solution to the perturbed problem is the stochastic process\cite{Oksendal2010}

\begin{equation}
    \mathbf{x}_{t}=e^{\mathbf{A}t}\mathbf{x}_{0}+\varepsilon \int_{t_{0}}^{t}e^{\mathbf{A}(t-s)}\mathbf{b} ds+\varepsilon\int_{t_{0}}^{t}e^{\mathbf{A}(t-s)}\sigma d\mathbf{W}_{s}.
\end{equation}

The mean-square of the difference between the idealized model solution, and the real solution is 
\begin{align}
    &\EX(|\mathbf{x}^\varepsilon_{t}-\mathbf{x}_{t}^{0}|^{2}) \nonumber\\
    &=\varepsilon^2\left( \int_{t_{0}}^{t}e^{\mathbf{A}(t-s)}\mathbf{b} ds\right)^2 + \varepsilon^2\EX \left(\int_{t_{0}}^{t}e^{\mathbf{A}(t-s)}\sigma d\mathbf{W}_{s}\right)^2 \nonumber\\
    &= \varepsilon^2\left( \int_{t_{0}}^{t}e^{\mathbf{A}(t-s)}\mathbf{b} ds\right)^2 + \varepsilon^2\int_{t_{0}}^{t}||e^{\mathbf{A}(t-s)}\sigma||^2_F ds.
\end{align}
Here, we used that the expected value of the mixed term is zero and the expression for the second integral follows from It{\^o}'s isometry.  

An immediate consequence of this calculation is the optimality of Theorem 4.
If system \eqref{eq:linsde} is a scalar equation, $x_t \in \mathbb{R}, A, \sigma, b \in \mathbb{R}$, then once we evaluate the integrals, we obtain
\begin{align}
\EX(|x^\varepsilon_{t}-x_{t}^{0}|^{2})&= \frac{\varepsilon^2 b^2}{A^2}\left(e^{A(t-t_0)} - 1\right)^2 + \frac{\varepsilon^2\sigma^2}{2 A}(e^{2A(t-t_0)}-1)\nonumber\\    
&= \varepsilon^2 b^2 \text{MS}_{t_0}^t.
\end{align}

This shows that Theorem 4 is optimal: the bound it provides cannot be strengthened for general systems.
\subsection{Numerical examples}
\begin{example} The Duffing oscillator \end{example}
{To illustrate our main results, we apply
formula (\ref{eq:leadbound}) to two models of differing complexity.
First, let us consider once again the damped-driven Duffing oscillator, defined by \eqref{eq:duffing}, which exhibits chaotic behavior}. In the presence of a deterministic, time-periodic perturbation, the trajectory uncertainty was already shown in Fig. \ref{fig:fig1}.  To assess the sensitivity to general, possibly stochastic perturbations, we first calculate MS$_{t_0}^t$ and display it
on a uniform grid over the domain $U=[-1.5,1.5]\times[-1.5,1.5],$
for two time intervals of interest, $[0,2\pi]$ and $[0,4\pi]$. This calculation only requires knowledge of the idealized system and the relative magnitude of modeling errors. For the calculation of the Cauchy-Green strain tensor, we use finite differences, over a secondary grid \cite{Onu2015} to increase accuracy.

\begin{figure}[h]
\includegraphics[width=0.49\textwidth]{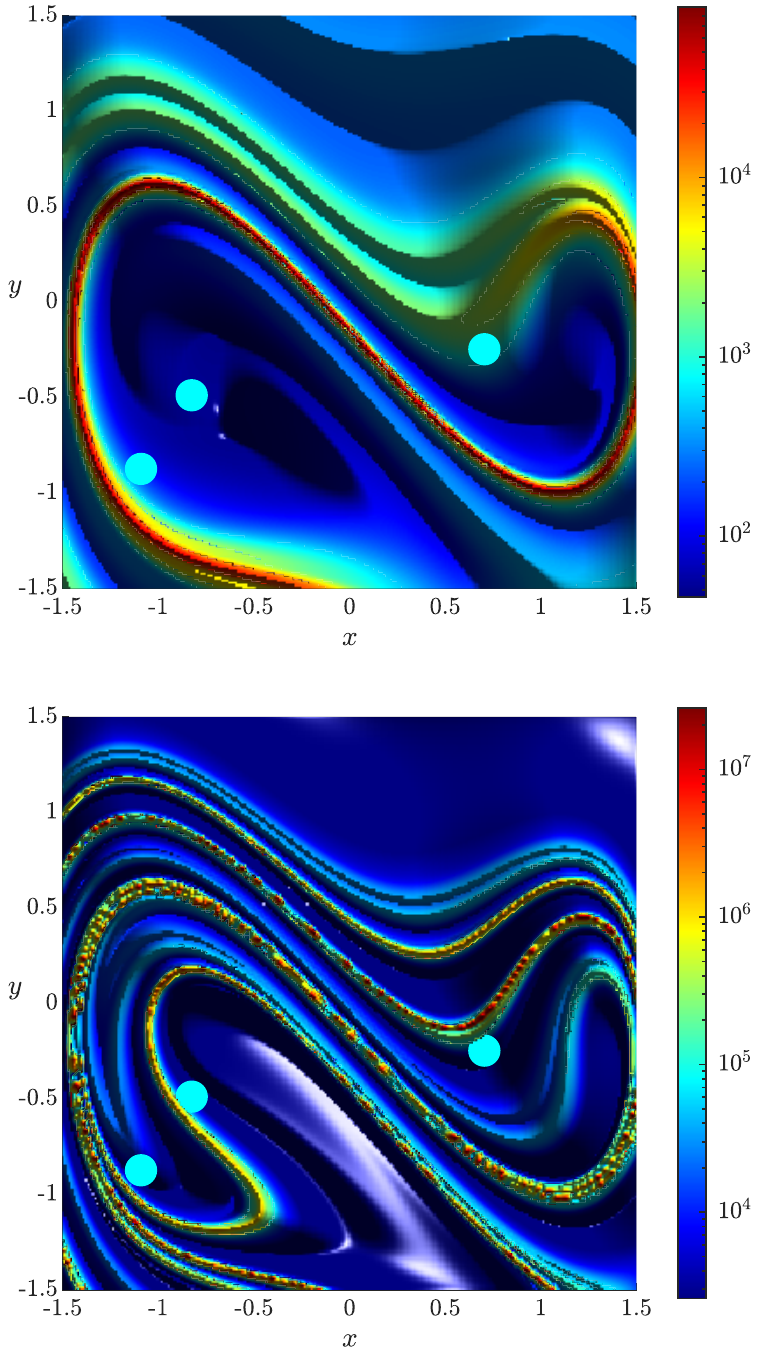}
\caption{Model sensitivity (MS) for the Duffing oscillator under both deterministic and stochastic modeling errors. The value of the MS is obtained from formula \eqref{eq:MSdef}, applied to
system \eqref{eq:duffing} with parameters $\delta=0.15,$ $A=0.3$.
Both the deterministic model error and the noise is assumed to have
amplitude $\varepsilon$, that is $\Delta_{\infty}(x,t)=\varepsilon$,
$\Delta_{\infty}^{\sigma}(x,t)=\varepsilon^{2}$, with $\varepsilon=0.01$. In the upper panel, the time interval of interest is $[0, 2\pi]$, while in the lower panel, it is $[0, 4\pi]$. Light blue dots mark the starting points of the trajectories relevant for Fig. \ref{fig:duffingmc}.}\label{fig:duffing1}
\end{figure}

MS fields are shown in Fig. \ref{fig:duffing1}.
We now assume a specific modeling error that contains both a deterministic and a stochastic component. The equations then are SDEs, which read as
\begin{align}
    dx_t &= ydt, \label{eq:sdeDuff}\\
    dy_t &= (x_t - x_t^3 -\delta y_t + A \cos t)dt + \varepsilon \sin(\omega_p t)dt + \varepsilon dW_t. \nonumber
\end{align}
In this case, both types of errors are assumed to be of norm $\varepsilon$, that is, $\Delta_\infty^2 = \Delta_\infty^\sigma = \varepsilon^2$, with $\omega_p=10$.

We compare the bound \eqref{eq:leadbound} on the leading-order trajectory uncertainty, obtained from MS, with the actual observed mean-squared trajectory uncertainty at select initial conditions. We calculate the mean-squared trajectory uncertainty from 2000 realizations of the stochastic process defined by \eqref{eq:sdeDuff}. For the solution of the SDE, an Euler-Maruyama scheme is used. The phase-space locations of the initial conditions considered are marked in Fig. \ref{fig:duffing1}. 

\begin{figure*}
    \centering
    \includegraphics[width=\textwidth]{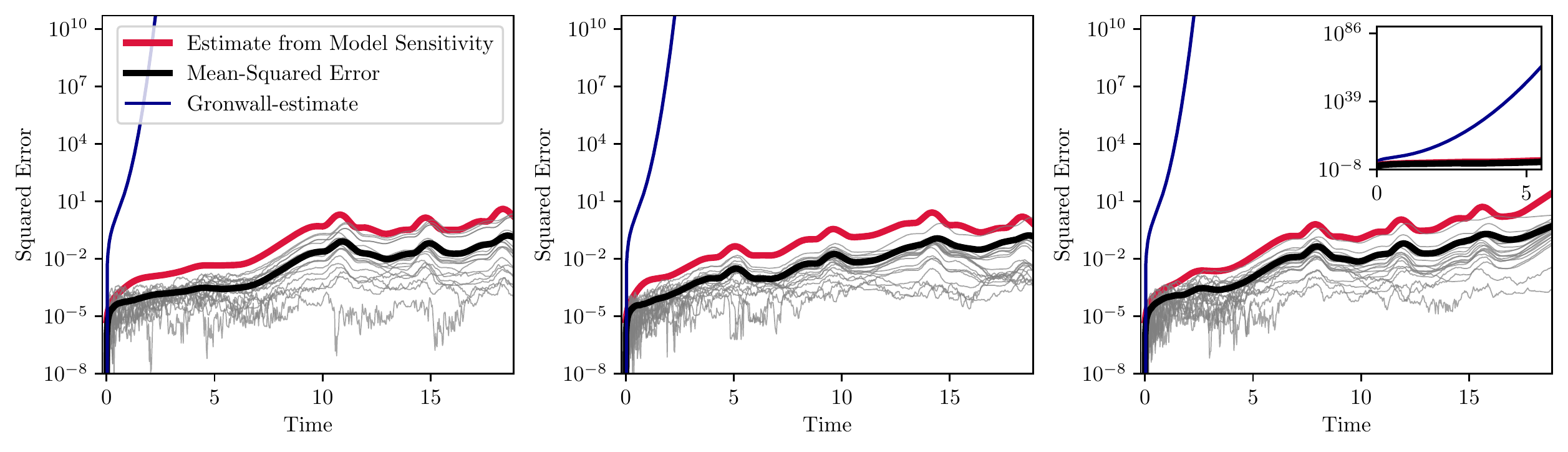}
    \caption{Square of the difference between the idealized model solutions and the perturbed solutions to the Duffing-system \eqref{eq:sdeDuff}. The grey curves show the error along a few sample paths, the black curve is the mean-squared error computed from 2000 sample paths. The red curve is the upper bound on the mean-squared leading-order trajectory uncertainty, defined by MS$_{t_0}^t(x_0,r)\Delta_\infty^2(x_0,t)$. The blue curve is the Gronwall-type upper bound for the mean-squared error\cite{note}, asymptotically given as $e^{L^2t^2/2}$, where $L$ is a Lipschitz-constant for \eqref{eq:duffing} and was chosen to be $L=2$. The inset shows the three curves on a larger scale. The modeling errors are detailed in the text, $\Delta_\infty^2=\varepsilon^2$ with $\varepsilon=0.01$ and $r=1$. The initial conditions are: left panel: $x_0 = (-0.8253, -0.48795)$, middle panel: $x_0 = (-1.0904,-0.87349)$, right panel: $x_0=(0.70482, -0.24699)$.}\label{fig:duffingmc}
\end{figure*}

The estimated upper bounds on the trajectory uncertainties are shown in the three panels of Fig. \ref{fig:duffingmc}. For all three initial conditions, we see that the bound on the expected mean-squared trajectory uncertainty is confirmed.
{Remarkably, Fig. \ref{fig:duffingmc} shows that the mean-squared trajectory uncertainty stays within two orders of magnitude of the leading-order bound, closely following trends in its graph. In other words, not only is MS a quantitatively accurate upper estimate, but it also provides qualitative information about the time dependence of the error growth.

For this particular system, the upper bound has predictive power over finite time intervals. The reason is that the idealized model has an underlying chaotic attractor (of finite size) and thus does not allow unbounded growth of errors. This is not the case, for example, in system \eqref{eq:linsde} (with $n=1$), where the bound was shown to be attained exactly for all times, providing an infinitely large relevant time interval. However, we also note that in practice, this relevant time interval can be quite long, much longer than what we would consider relevant for a rigorous, Gronwall-type estimate, which can also be derived for stochastic modeling errors\cite{note}. See Fig. \ref{fig:fig1} for a comparison in the purely deterministic case or Fig. \ref{fig:duffingmc} in the stochastic case. }

As noted earlier for the calculation of MS$_{t_0}^{t}(x_0, r)$, as well as for the leading-order trajectory uncertainty, we did not make any assumptions on the \emph{form} of the modeling errors. For this reason, given one specific instance of modeling errors and two points $x^{(1)}_0$ and $x^{(2)}_0$, the relation
\begin{equation}
    \label{eq:ms1ms2}
    \text{MS}_{t_0}^t\left(x^{(1)}_0,r\right) <     \text{MS}_{t_0}^t\left(x^{(2)}_0,r\right)
\end{equation}
does not imply that the actual trajectory uncertainty will be greater in $x^{(2)}_0$ than in $x^{(1)}_0$. 
Instead, what we can conclude from \eqref{eq:ms1ms2}, is that the dynamics at $x^{(1)}_0$ is such, that it can allow higher trajectory uncertainty than at $x^{(2)}_0$.

\begin{example}The Charney- DeVore model\end{example}

Next, we turn to a higher-dimensional model\cite{Charney1979}. It is demonstrated in
Ref. \onlinecite{Crommelin2004}, that a six dimensional reduced order model
for barotropic flow over topography admits multiple
equilibria, and can even exhibit tipping transitions between them.
Therefore, the \emph{Charney-DeVore}\cite{Charney1979} model is expected to show
highly unstable transient behavior \citep{Babaee2017}, which results
in high sensitivity with respect to perturbations. The dynamical equations
are

\begin{align}
\dot{x}_{1} & =\widetilde{\gamma}_{1}x_{3}-C(x_{1}-x_{1}^{*}),\nonumber \\
\dot{x}_{2} & =-(\alpha_{1}x_{1}-\beta_{1})x_{3}-Cx_{2}-\delta_{1}x_{4}x_{6},\nonumber \\
\dot{x}_{3} & =(\alpha_{1}x_{1}-\beta_{1})x_{2}-\gamma_{1}x_{1}-Cx_{3}+\delta_{1}x_{4}x_{5},\nonumber \\
\dot{x}_{4} & =\widetilde{\gamma}_{2}x_{6}-C(x_{4}-x_{4}^{*})+\lambda(x_{2}x_{6}-x_{3}x_{5}),\nonumber \\
\dot{x}_{5} & =-(\alpha_{2}x_{1}-\beta_{2})x_{6}-Cx_{5}-\delta_{2}x_{4}x_{3},\label{eq:cdv}\\
\dot{x}_{6} & =(\alpha_{2}x_{1}-\beta_{2})x_{5}-\gamma_{2}x_{4}-Cx_{6}+\delta_{2}x_{4}x_{2}.\nonumber 
\end{align}

The coefficients $\alpha_m$, $\beta_m$, $\gamma_m$, $\delta_m$ are defined by
\begin{align}
    &\alpha_m = \frac{8 \sqrt{2}}{\pi}\frac{m^2}{4m^2-1}\frac{b^2 + m^2 -1}{b^2+m^2}, \qquad \beta_m=\frac{\beta b^2}{b^2+m^2}, \nonumber\\
    &\delta_m = \frac{64\sqrt{2}}{15\pi}\frac{b^2-m^2+1}{b^2+m^2}, \qquad \widetilde{\gamma}_m=\gamma \frac{4m}{4m^2-1}\frac{\sqrt{2}b}{\pi}, \nonumber\\
    &\lambda = \frac{16 \sqrt{2}}{5\pi}, \qquad \gamma_m = \gamma \frac{4m^3}{4m^2-1}\frac{\sqrt{2}b}{\pi(b^2+m^2)}.
\end{align}
As in Refs. \onlinecite{Babaee2017,Crommelin2004}, we set the parameters, to correspond to the multistable regime: $(x_1^*, x_4^*, C, \beta, \gamma, b)= (0.95, -0.76095, 0.1, 1.25, 0.2, 0.5)$. 

\begin{figure*}
\includegraphics[width = \textwidth]{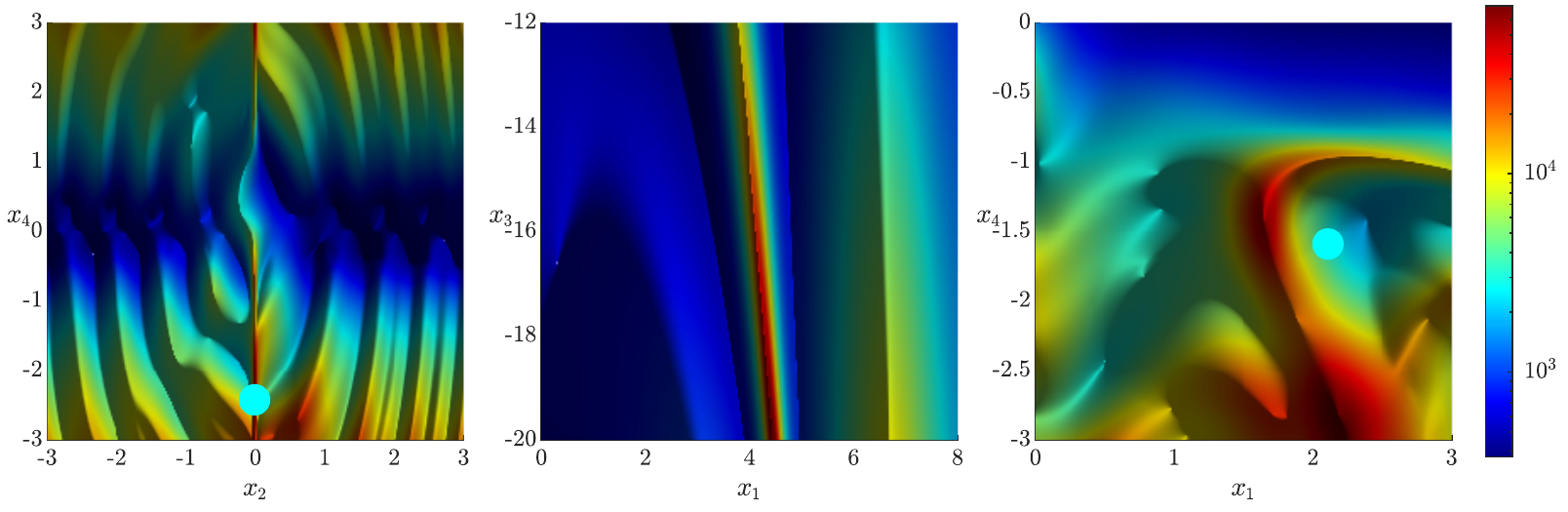}
\caption{Model sensitivity (MS), computed for the Charney-DeVore
model \eqref{eq:cdv}. The parameter values used are given in the
text, the time interval is $t_{0}=0,$ $t=15.$ The ratio of the importance of stochastic and deterministic modeling errors was set to $r=1$. The figures show different
two-dimensional slices of the six-dimensional phase space. Light blue dots mark the starting points of the trajectories relevant for Fig. \ref{fig:cdvmc}.}
\label{fig:cdv1}
\end{figure*}

The MS$_{t_0}^t$ field is shown in Fig. \ref{fig:cdv1} along a few slices of phase space. Similarly to the low-dimensional Duffing oscillator, the phase space of the Charney-DeVore model also exhibits high variability for MS. 

Although we cannot show the complete MS field in this high-dimensional phase space, this example demonstrates how our method remains applicable in higher-dimensional systems. Even in this lower-dimensional representation, we can distinguish structures, with particularly high sensitivity to perturbations over the chosen time scale. 

Next, we fix a modeling error to the equations \eqref{eq:cdv} in the form
\begin{equation}
    \mathbf{g}(\mathbf{x},t) = \mathbf{b_0}\sin(\mathbf{k}\cdot \mathbf{x})\cos(\omega_p t),\quad |\mathbf{b_0}|=1,\quad \mathbf{\sigma} = \frac{1}{\sqrt{6}}\mathbb{I}.
\end{equation}
This represents a deterministic modeling error that is periodic in both time and space. In addition, each coordinate is perturbed by an independent Wiener-process. 

The vector $\mathbf{b}_0$ is of unit length and has components $\mathbf{b}_0 = (0.310, 0.376, 0.476, 0.478, 0.281, 0.479)$. The wave vector is $\mathbf{k}=(1.815, 1.905, 1.127, 1.913, 1.632, 1.097)$, $\omega_p =10$. With this choice of the parameters, the magnitude of the perturbations is once again $\Delta_\infty^2(x_0,t)=\varepsilon^2|\mathbf{b}_0|^2 =\varepsilon^2$ and $\Delta_\infty^\sigma= \varepsilon^2 \text{tr }\mathbf{\sigma}^T\mathbf{\sigma}=\varepsilon^2$, $r=1$. 

\begin{figure}
    \centering
    \includegraphics[width=0.50\textwidth]{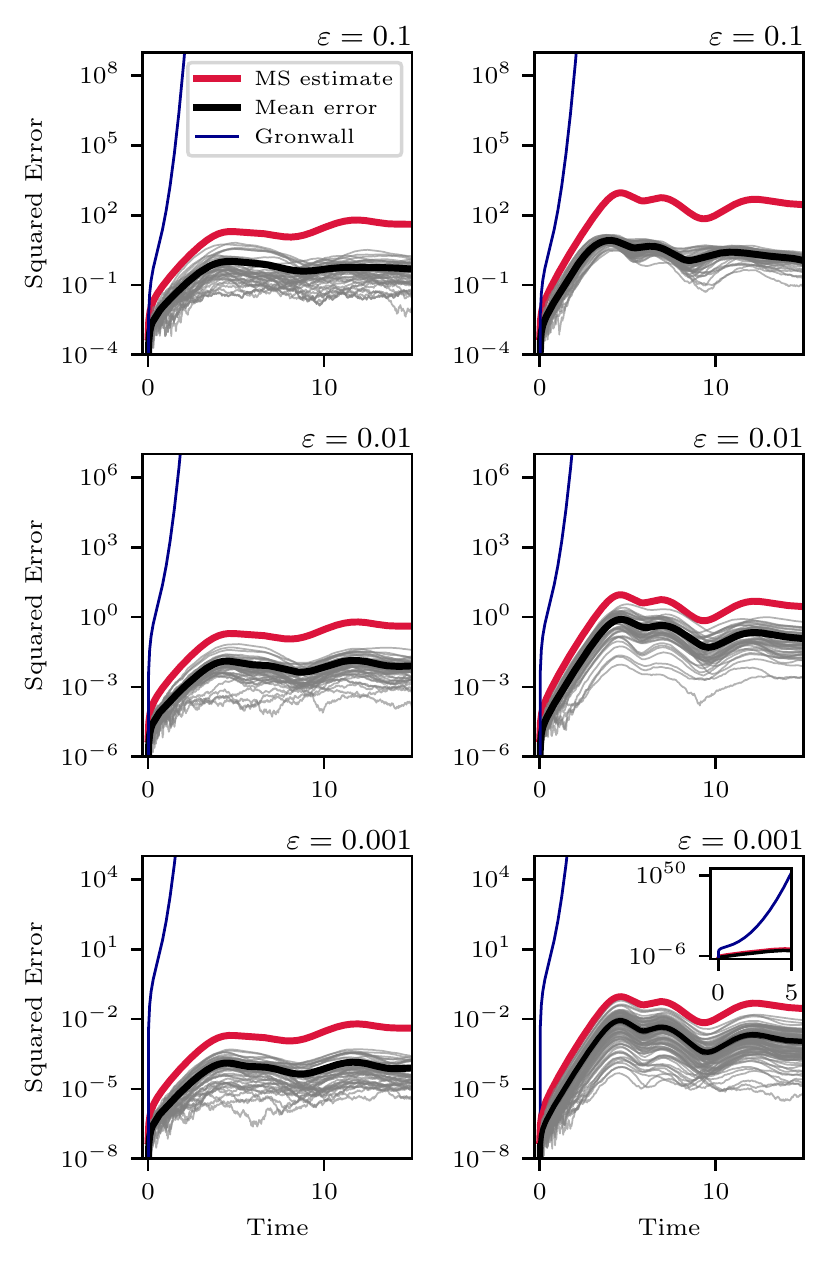}
    \caption{Square of the difference between the idealized and the real solutions to the Charney-DeVore model. Grey lines show the squared error along sample paths, the black curve is the average computed from 1000 sample paths. The red curve is the upper bound on the mean-squared leading-order trajectory uncertainty. The blue curve is the Gronwall-type upper bound for the mean-squared error\cite{note}, asymptotically given as $e^{L^2t^2/2}$, where $L$ is a Lipschitz-constant for \eqref{eq:cdv} and was chosen to be $L=1.8$. The inset shows the three curves on a larger scale. The trajectories start from the point $\mathbf{x}_{0}^{(1)} = (0, -0.012048, 0,-2.4217, 0,0)$ [$\mathbf{x}_{0}^{(2)} = (2.1084, 0, 0, -1.5904, 0,0)$] in the left [right] column. The magnitude of the perturbations is $\varepsilon$, which is indicated above the panels. $r=1$. }
    \label{fig:cdvmc}
\end{figure}

A comparison of the bound on the mean-squared leading-order trajectory uncertainty and the actual measured mean-squared trajectory uncertainty is shown in Fig. \ref{fig:cdvmc}, along with an appropriate Gronwall-type bound. Here, one of the initial conditions is chosen to lie on a steep ridge of MS (left column), while the other is chosen from a region with lower values (right column). The results show that the bound on the mean-squared leading-order trajectory uncertainty is respected for both initial conditions, in a wide range of $\varepsilon$. {While the mean-squared trajectory uncertainty is overestimated for the interval [0,15] (by a factor of around 10), the {\em trends} of the graph are captured accurately by our estimate in all of the examples shown}.

\section{\label{sec:ridges}Geometric structure of the model sensitivity}

Geometric descriptions of uncertainty in dynamical systems involve the finite-time Lyapunov exponent. This quantity describes
the growth rate of infinitesimal perturbations to
initial conditions. \emph{Ridges }of the FTLE field often signal repelling material surfaces in the phase space\cite{Haller2015}. Under further assumptions\cite{Haller2011b}, one can rigorously conclude the presence of a repelling hyperbolic LCS from an FTLE ridge. 

Our results show, that the FTLE
field in itself is not sufficient to characterize sensitivity to modeling errors in dynamical systems.
By Theorem 4, one needs to integrate the time-dependent FTLE
field over the time interval of interest, to obtain MS. Regardless, the two fields
are clearly related. 

For purely deterministic perturbations, let us
denote the maximal eigenvalue of the Cauchy-Green strain tensor, computed
over $[s,t]$ at the point $x_0$ by 

\begin{equation}
    \Lambda_{s,t}(x_0,t_0)=\Lambda_{s}^{t}\left(F_{t_{0}}^{s}(x_{0})\right),
\end{equation}

where $F_{t_{0}}^{s}(x_{0})$ is the flow-map of the idealized model \eqref{eq:model}. Specifically, with this notation, we can write the FTLE as

\begin{equation}
    \text{FTLE}_{t_{0}}^{t}(x_{0})=\frac{\log\sqrt{\Lambda_{t_0,t}(x_0,t_0)}}{t-t_{0}}.
\end{equation}

We will use quantity $\Lambda_{s,t}$ to connect features of the MS field to those of the FTLE field. Such a connection is already suggested by Fig. \ref{fig:ftlems}, which compares the FTLE field of the Charney-DeVore model to the field 
\begin{equation}
    \frac{1}{2(t-t_0)}\log \text{MS}_{t_0}^t(x_0,r)
\end{equation}
of the same model. That is, we display MS on a similar scale as the FTLE for a better comparison. This scale will be justified later. 

\begin{figure*}
\includegraphics[width=\textwidth]{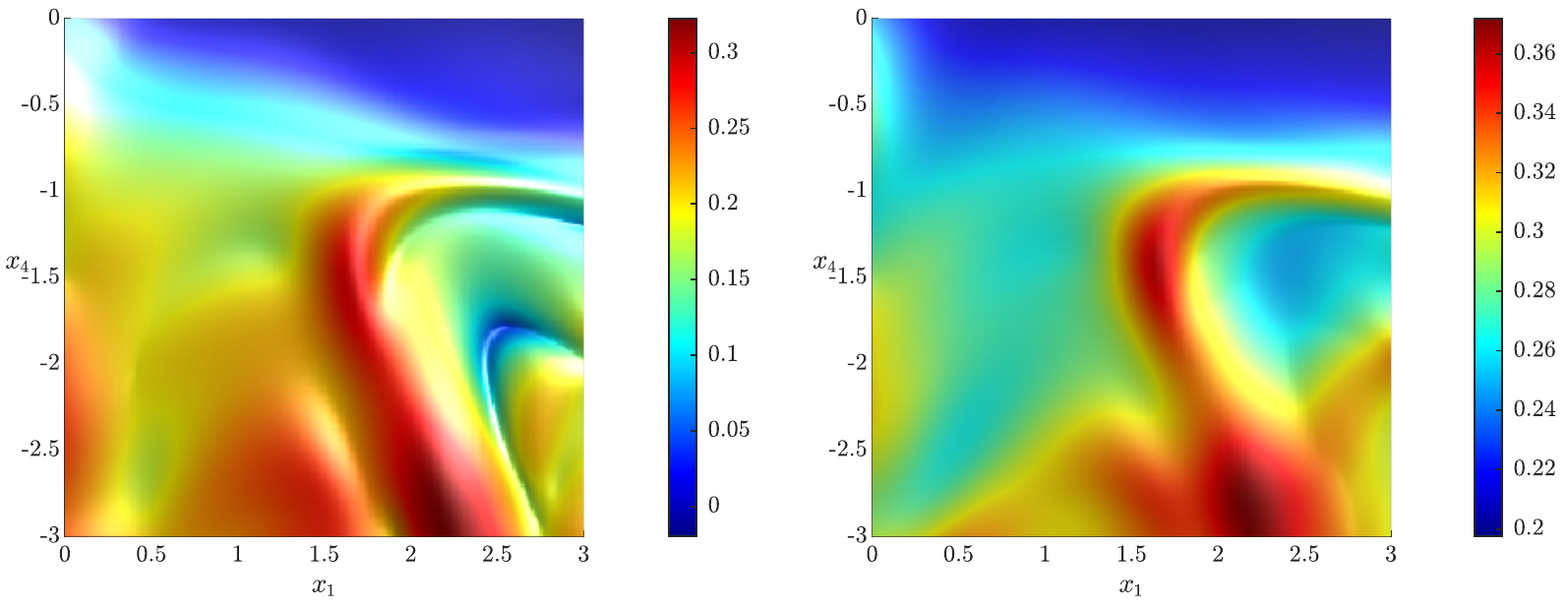} 
\caption{Comparison between the FTLE and the MS fields for the Charney-DeVore model. The scalar fields are shown for the time interval $[0,15]$, over the $x_1-x_4$ plane. In the left panel, the FTLE field, while in the right panel, the field $\frac{1}{2(t-t_0)}\log \text{MS}_{t_0}^t(x_0,0)$ is plotted.}
\label{fig:ftlems}
\end{figure*}

The figure shows that the main features of the FTLE field are also
found in the MS field, if they are compared over the
same time interval. Specifically, the main organizers of the dynamics, the FTLE-ridges, tend to persist in the MS field.
A closer look reveals, however, that this is not always the case. For example, in the region around $(x_1 = 2.5, x_4 = -1.5)$, finer ridges of the FTLE field disappear in the MS field.

To analyze this phenomenon, we adopt the following definition of a ridge from Ref. \onlinecite{Karrasch2013}. 
\begin{definition}
Let $f: \mathbb{R}^n \to \mathbb{R}$ be a smooth function and $M\subset \mathbb{R}^n$ a compact, codimension-one manifold with a boundary $\partial M$. The manifold $M$ is a \emph{ridge} of the scalar field $f$ if both $M$ and $\partial M$ are normally attracting invariant manifolds for the gradient system
\begin{equation}\label{eg:defridge}
    \dot{x} = \nabla f(x).
\end{equation}
\end{definition}
The term normally attracting invariant manifold\cite{Fenichel1971} refers to an invariant manifold for which contraction along the manifold is dominated by contraction normal to it. This allows the use of results that guarantee the persistence of ridges under small perturbations to the scalar field $f$.

To find a condition relating the MS- and FTLE-ridges, we assume that the deterministic modeling error in \eqref{eq:MSdef} is the only contributor to MS, that is, MS$_{t_0}^t(x_0,r)$ with $r =0$.
In this case, we can write
\begin{align}
    \text{MS}_{t_0}^t(x_0,0) &= \left(\int_{t_0}^t \sqrt{\Lambda_{s,t}(x_0,t_0)}ds \right)^2 \nonumber\\
    &= \Lambda_{t_0,t}(x_0,t_0) \left(\int_{t_0}^t \sqrt{\frac{\Lambda_{s,t}(x_0,t_0)}{\Lambda_{t_0,t}(x_0,t_0)}}ds \right)^2.
\end{align}
Taking the logarithm and using expression \eqref{eq:ftledef} for the FTLE field, we obtain
\begin{align}
\label{eq:msFTLE}
    \frac{\log \text{MS}_{t_0}^t(x_0,0)}{2(t-t_0)} = \text{FTLE}_{t_0}^t(x_0) + \frac{1}{t-t_0}\log\int_{t_0}^t \sqrt{\frac{\Lambda_{s,t}(x_0,t_0)}{\Lambda_{t_0,t}(x_0,t_0)}}ds.
\end{align}

Equation \eqref{eq:msFTLE} shows that we are able to write the appropriately scaled MS field as a perturbation of the FTLE field. The difference between the MS and the FTLE fields is shown in Fig. \ref{fig:fig7}, which suggests the \emph{values} of the two fields differ substantially, even at the location of the persisting ridge. We conclude that in general, ridges of the MS field are different from those of the FTLE field. 
As seen from \eqref{eq:msFTLE}, the MS field must be treated as a finite-size-perturbation to the FTLE field: the persistence results of Ref.  \onlinecite{Karrasch2013} do not apply. 

The general results on persistence of normally hyperbolic invariant manifolds \cite{Karrasch2013,Fenichel1971} state that for a ridge of a scalar field $f_0(x_0)$ to persist in the field $f(x_0)$, the appropriate gradient vector fields in \eqref{eg:defridge} must be $C^1$-$\theta$ close, for $\theta$ small enough. 

In our setting, this translates into a condition on the gradient of the difference field, defined by \eqref{eq:msFTLE}. This indicates that ridges of the FTLE field, along which 
\begin{align}
\label{eq:smallcond}
    \sup_{x_0\in V}&\left \Vert \nabla \frac{1}{t-t_0}\log \int_{t_0}^t \sqrt{\frac{\Lambda_{s,t}(x_{0},t_0)}{\Lambda_{t_0,t}(x_0,t_0)}}ds \right \Vert \leq \theta, \nonumber\\
    \sup_{x_0\in V}&\left \Vert \nabla^2 \frac{1}{t-t_0}\log \int_{t_0}^t \sqrt{\frac{\Lambda_{s,t}(x_{0},t_0)}{\Lambda_{t_0,t}(x_0,t_0)}}ds \right \Vert\leq \theta,
\end{align}
holds for $\theta$ sufficiently small, are expected to be close to ridges of the scalar field $\frac{\log \text{MS}_{t_0}^t(x_0,0)}{2(t-t_0)}$. 

\begin{figure}
    \centering
    \includegraphics[width=0.48\textwidth]{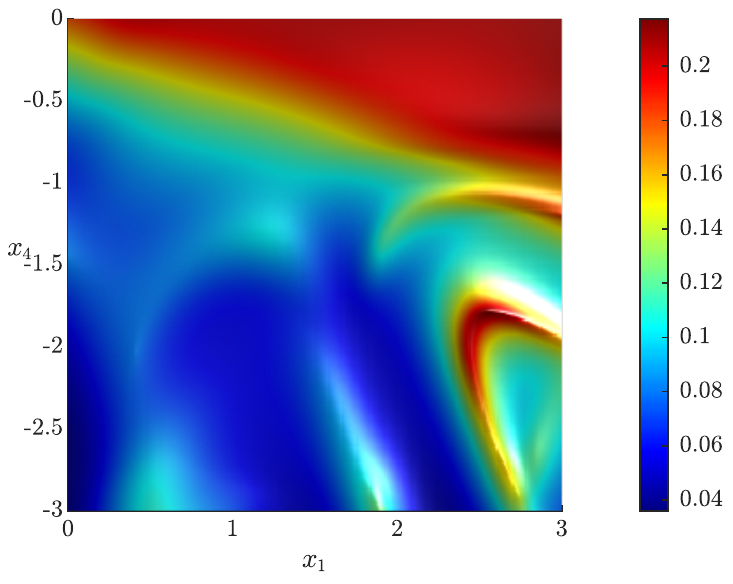}
    \caption{Difference between the MS and the FTLE fields in the Charney-DeVore model. The quantity $\frac{\log \text{MS}_{t_0}^t(x_0,0)}{2(t-t_0)} - \text{FTLE}_{t_0}^t(x_0)$ is shown over the $x_1-x_4$ plane, for the time interval $[0,15]$. }
    \label{fig:fig7}
\end{figure}

\section{Conclusions}
We have investigated the effect of modeling uncertainties on trajectories of a dynamical system. Under general smoothness assumptions, in a deterministic setting, we derived a bound on the leading order trajectory uncertainty which can be computed using the idealized model dynamics and assuming a bound on the magnitude of the modeling error. Our upper bound depends on the eigenvalues of an appropriate Cauchy-Green strain tensor of the idealized model, allowing for a location-specific assessment of trajectory uncertainty in the phase space.

We have also generalized our result to the case of stochastic modeling errors. In that setting, we have introduced the \emph{Model Sensitivity (MS)}, a coefficient relating the modeling uncertainty to the bound of the mean-squared leading-order trajectory uncertainty. This MS is computed solely in terms of the idealized, deterministic dynamics as a general functional of the invariants of the Cauchy-Green strain tensor. As a consequence, we do not need to assume any specific form for the modeling errors to quickly assess their effects. Contrary to prior, statistical and data based methods, MS quantifies trajectory sensitivity based on the dynamical properties of the known model.

We have also shown that our bounds on the leading order trajectory uncertainty are optimal. Specifically, for a class of linear systems, we gave an example in which the mean-squared trajectory uncertainty was exactly equal to the product of the MS and the modeling uncertainty. Therefore, for general systems satisfying our smoothness assumptions, our bounds cannot be improved. 

{On numerical examples (one of which was a chaotic system), we showed that MS can be a useful predictor of local trajectory uncertainty: the mean-squared trajectory uncertainty qualitatively follows the bound defined by \eqref{eq:leadbound} for surprisingly long time intervals, which is not the case with the classical, Gronwall-type bounds\cite{Kirchgraber1976,note}. When viewed as a scalar field over the phase space, the MS field can exhibit complex structure, which allows us to distinguish particularly sensitive regions.} We have shown that the MS fields, are similar to the FTLE fields, which are often used to characterize instability in phase space. This is in line with the usual reasoning that the instabilities within a dynamical system typically grow with the rate of the largest Lyapunov exponent. Our results make this argument precise by pointing out the exact relationship between MS and the FTLE. In particular, we find that not all features of the FTLE field persist in the MS field.

The sensitivity analysis developed here becomes more computationally intensive for higher-dimensional dynamical systems. Indeed, the calculation of the Cauchy-Green strain tensor becomes problematic for even a few hundred dimensions. This could be improved by using approximate methods such as OTD (Optimally Time Dependent) modes\cite{Babaee2017}, enabling the calculation of the dominant Cauchy-Green eigenvalue with much less effort. This approach could give useful results even for certain climate models, for which sensitivity analysis is critical.
\section*{Acknowledgements}
We are grateful to Hessam Babaee for his help with the implementation of the Charney-DeVore model. 
\section*{Data Availability Statement}
The data that support the findings of this study are openly available in the repository Ref. \onlinecite{github}.
It contains the numerical implementation of the examples discussed in the text.
\section{Appendix}
\subsection{\label{sec:app1}Proof of Theorem 1}
The trajectory uncertainty, based on a power-series expansion, is
\begin{equation}
   \left \vert x^0(t)-x^\varepsilon(t)\right \vert = |\varepsilon \eta(t,t_0,x_0) + O(\varepsilon^2)|. 
\end{equation}
This can be bounded by
\begin{equation}
\label{eq:proof1}
    |\varepsilon \eta(t,t_0,x_0) + O(\varepsilon^2)| \leq \varepsilon|\eta(t,t_0,x_0)| + |O(\varepsilon^2)|, 
\end{equation}
where the remainder $O(\varepsilon^2)$ term can be bounded by $M\varepsilon^2$, when $\varepsilon \leq \bar{\varepsilon}$ for some $\bar{\varepsilon}>0$. This bound depends on $x_0$, but assuming a compact domain $U$ within $\mathbb{R}^n$, we can choose the constants $M>0$ and $\bar{\varepsilon}$ such that this bound is satisfied for all $x_0\in U$. 

Next, we simply substitute the bound \eqref{eq:sensitivitybound} for the leading-order trajectory uncertainty into Eq. \eqref{eq:proof1} to obtain
\begin{align}
\label{eq:proof2}
&\left \vert x^0(t)-x^\varepsilon(t) \right\vert \nonumber\\
&\;\leq \left(\int_{t_0}^t \sqrt{\Lambda_s^t\left(x^0(s)\right)}ds \right)\Delta_\infty(x_0,t) + M\varepsilon^2\nonumber\\
&\;=\left(\int_{t_0}^t \sqrt{\Lambda_s^t\left(x^0(s)\right)}ds \right)\Delta_\infty(x_0,t) + \Delta_\infty(x_0,t) \frac{M\varepsilon^2}{\Delta_\infty(x_0,t)} \nonumber\\
&\;= \left(\int_{t_0}^t \sqrt{\Lambda_s^t\left(x^0(s)\right)}ds +\frac{M\varepsilon^2}{\Delta_\infty(x_0,t)} \right)\Delta_\infty(x_0,t)
\end{align}
Comparing Eq. \eqref{eq:proof2} to Eq. \eqref{eq:uncertbound} we obtain that if \begin{equation}
    \delta<\min_{\substack{t\in{[t_0,t_1]}\\x_0\in U}}\frac{M\bar{\varepsilon}^2}{\Delta_\infty(x_0,t)}=\bar{\delta},
\end{equation}
then we can set
\begin{equation}
    \varepsilon_0 := \max_{\substack{t\in{[t_0,t_1]}\\x_0\in U}}\sqrt{\frac{\Delta_\infty(x_0,t)\delta}{M}}.
\end{equation}
Otherwise, if $\delta\geq \bar{\delta}$, inequality \eqref{eq:proof2} is satisfied for $\varepsilon_0:=\bar{\varepsilon}$. Hence, for all $\delta>0$ we can choose
\begin{equation}
    \varepsilon_0:= \min\left\{\max_{\substack{t\in{[t_0,t_1]}\\x_0\in U}}\sqrt{\frac{\Delta_\infty(x_0,t)\delta}{M}},\bar{\varepsilon}\right\}
\end{equation}
as claimed.
\subsection{\label{sec:app2}Proof of Theorem 2}

The statement follows from an asymptotic
expansions for stochastic differential equations \citep{blago1962,Friedlin2012}
of the form 
\begin{equation}
dX_{t}^{\varepsilon}=\mu(X_{t}^{\varepsilon})dt+\Sigma^{\varepsilon}(X_{t}^{\varepsilon})dW_{t}.\label{eq:autonomousSDE}
\end{equation}
Assume that the coefficient functions $\mu$ and $\Sigma^{\varepsilon}$
have bounded and measurable derivatives up to order $m$. Then, there exists $\bar{\varepsilon}>0$, such that for $\varepsilon<\bar{\varepsilon}$ one can
recursively obtain stochastic differential equations for the stochastic
variables $X_{t}^{(k)}$. For any $k<m$
\begin{equation}
    X_{t}^{\varepsilon}=X_{t}^{0}+\varepsilon X_{t}^{1}+\varepsilon^{2}X_{t}^{2}+...+\varepsilon^{k-1}X_{t}^{k-1}+ \varepsilon^k R_{k}(t,\varepsilon),
\end{equation}
with the remainder term being bounded in the mean-squared sense. For a proof of the $n$ dimensional case, see Ref.~\onlinecite{Albeverio2015}.

To generalize this result for time- and $\varepsilon$-dependent drift and diffusion
coefficients, we proceed by introducing an SDE on the extended phase
space $\mathbb{R}^{n}\times[t_{0},t_{1}]\times[0,\bar{\varepsilon}]$ of Eq. (\ref{eq:SDE}).
Let $X^\varepsilon_{t}\in$$\mathbb{R}^{n}\times[t_{0},t_{1}]\times[0,\bar{\varepsilon}]$, $X^\varepsilon_{t}=(x_{t,1},x_{t,2},...,x_{t,n},t, \varepsilon)$ and

\begin{align}
    \mu(X^\varepsilon_{t})&=\begin{pmatrix}f_0(x_{t},t)_{1}+\varepsilon g(x_{t},t,\varepsilon)_1\\
f_0(x_{t},t)_{2}+\varepsilon g(x_{t},t,\varepsilon)_2\\
\vdots\\
f_0(x_{t},t)_{n}+\varepsilon g(x_{t},t,\varepsilon)_n\\
1 \\
0
\end{pmatrix}, \nonumber\\
\Sigma^{\varepsilon}(X^\varepsilon_t)&=\begin{pmatrix}\varepsilon\sigma_{11}(x_{t},t) & \cdots & \varepsilon\sigma_{1n}(x_{t},t) & 0 &0\\
\vdots & \ddots & \vdots & 0 & 0\\
\varepsilon\sigma_{n1}(x_{t},t) & \cdots & \varepsilon\sigma_{nn}(x_{t},t) & 0 & 0 \\
0 & \cdots & 0 & 0 & 0 \\
0 & \cdots & 0 & 0 & 0
\end{pmatrix}.
\end{align}

The resulting SDE has the desired form of Eq. (\ref{eq:autonomousSDE})
and the coefficients retain the analytic properties of the functions
$f$ and $\sigma$, i. e. they remain measurable and have bounded
derivatives. This means we can apply the result of Ref. \onlinecite{Albeverio2015}.
to obtain the following first-order expansion

\begin{equation}
    X_{t}^{\varepsilon}=X_{t}^{0}+\varepsilon X_{t}^{1}+\varepsilon^2 R_{2}(t,\varepsilon)
\end{equation}
for the solutions. The coefficients in the expansion are governed by the following set
of linear SDEs:

\begin{align}
dX_{t}^{0}=&\mu(X_{t}^{0})\,dt, \nonumber\\
X_{t=t_{0}}^{0}=&(x_{0},t_{0},0),\nonumber\\
dX_{t}^{1}=&\nabla \mu(X_{t}^{0})X_{t}^{1}\,dt+\left.\frac{\partial \Sigma^{\varepsilon} }{\partial \varepsilon} \right\vert_{X_t^0}\,dW_{t},\nonumber \\
 X_{t=t_{0}}^{1}=&(0,t_0, 0)\nonumber.
\end{align}

Setting $X_{t}^{0}=(x^0_t,t,0),\:X_{t}^{1}=(\eta_{t},t,0)$
and keeping only the the first $n$ entries of the vectors
yields the following expansion for the nonautonomous system \eqref{eq:SDE}:
\begin{align}
\label{eq:SDEfirstorder}
dx^0_t =&f_{0}\left(x^0_t,0\right)\,dt, \nonumber\\
\quad x^0_{t=t_0}=&x_{0},\nonumber\\
d\eta_{t} =&\nabla f_{0}\left(x^0_t,t\right)\eta_{t}\,dt +g\left(x^0_t,t;0\right)\,dt\nonumber\\
 &+\sigma\left(x^0_t,t\right)\,dW_{t},\nonumber \\
  \eta_{t=t_{0}}=&0,
\end{align}
as claimed.
\subsection{\label{sec:app3}Proof of Theorem 3}

We seek a solution of the inhomogeneous, linear SDE (\ref{eq:smallnoise2})
using the method of 'variation of constants' on the solution of the
homogeneous equation. Let the solution of the corresponding homogeneous
equation be
\begin{equation}
    x_{H}(t,x_{0})=\varphi(t)x_{0}.
\end{equation}

Here, $\varphi(t)$ is the (linear) flow map of Eq. \eqref{eq:eq of variations}, mapping initial conditions
at time $t_{0}$ to their position at time $t$. By the method of
variation of constants, let $\eta_{t}$ be of the form $\eta_{t}=\varphi(t)x_{t}$
for some random variable $x_{t.}$ We now compute the differential
$d\eta_{t}$, keeping in mind that $\eta_{t}$ is a vector-valued
stochastic process, requiring the use of It\^o's formula. However,
since $\eta_{t}=\eta(t,x)$ is only linear in the $x$ variable, we
simply have 
\begin{equation}
\label{eq:proof3}
    d\eta_{t}=\dot{\varphi}(t)x_{t}dt+\varphi(t)dx_{t}.
\end{equation}
Substituting Eq. (\ref{eq:smallnoise2}) into Eq. \eqref{eq:proof3} and noting that $\varphi$
is the fundamental matrix-solution to the equation of variations \eqref{eq:eq of variations} yields
\begin{align}
\label{eq:proof4}
d\eta_{t}  =&\dot{\varphi}(t)x_{t}\,dt+\varphi(t)dx_{t}=\nabla f_{0}\left(x^0_t,t\right)x_{t}\,dt+\varphi(t)dx_{t},\nonumber\\
\varphi(t)dx_{t}  =&g\left(x^0_t,t;0\right)\,dt+\sigma\left(x^0_t,t\right)\,dW_{t},\\
dx_{t}  =&\varphi(t)^{-1}\left[g\left(x^0_t,t;0\right)\,dt+\sigma\left(x^0_t,t\right)\,dW_{t}\right].\nonumber
\end{align}

The last expression in Eq. \eqref{eq:proof4} is an It\^o-integral, which can be evaluated as
\begin{equation}
   x_{t}=\int_{t_{0}}^{t}\varphi(s)^{-1}g\left(x^0_s,s;0\right)\,ds+\int_{t_{0}}^{t}\varphi(s)^{-1}\sigma\left(x^0_{s},s\right)\,dW_{s}.
\end{equation}
Using the form of $\eta_{t}$ and observing that $\phi_{s}^{t}\left(x^0_{s}\right)=\varphi(t)\varphi(s)^{-1}$
is the \emph{normalized} fundamental matrix solution to Eq. (\ref{eq:eq of variations}),
we obtain
\begin{align}
\eta_{t} =&\int_{t_{0}}^{t}\varphi(t)\varphi(s)^{-1}g\left(x^0_s,s;0\right)\,ds\nonumber\\
&+\int_{t_{0}}^{t}\varphi(t)\varphi(s)^{-1}\sigma\left(x^0_{s},s\right)\,dW_{s}\\
 =&\int_{t_{0}}^{t}\phi_{s}^{t}\left(x^0_{s}\right)g\left(x^0_s,s;0\right)\,ds+\int_{t_{0}}^{t}\phi_{s}^{t}\left(x^0_{s}\right)\sigma\left(x^0_{s},s\right)\,dW_{s},\nonumber
\end{align}
which proves the statement of Eq. \eqref{eq:eta_stoch}.
\subsection{\label{sec:app4}Proof of Theorem 4}

First, we compute $\EX(N^{2})$. By the properties of the It\^o-integral,
the expected value of the mixed term in Eq. \eqref{eq:nsquared} is 0, and hence
\begin{align}
\label{eq:proof5}
\EX(&N^{2}) \nonumber\\
 =& \EX(N_{d}^{2})+\EX(N_{s}^{2})+2\EX(N_{m})\nonumber\\
 =&N_{d}^{2}+\EX(N_{s}^{2})\nonumber\\
 & +2\left(\int_{t_{0}}^{t}\phi_{s}^{t}\left(x^0_{s}\right)g\left(x^0_s,s;0\right)\,ds\right)\EX\left(\int_{t_{0}}^{t}\phi_{s}^{t}\left(x^0_{s}\right)\sigma\left(x^0_{s}s\right)\,dW_{s}\right)\nonumber\\
 =&N_{d}^{2}+\EX(N_{s}^{2}).
\end{align}

For the stochastic part of the mean-squared leading-order trajectory uncertainty, we utilize
It\^o's isometry component-wise to obtain
\begin{align}
&\EX(N_{s}^{2})\nonumber\\
&\;=\EX\left[\left(\int_{t_{0}}^{t}\phi_{s}^{t}\left(x^0_{s}\right)\sigma\left(x^0_{s},s\right)\,dW_{s}\right)^{2}\right] \nonumber\\
&\;=\EX\left[\left(\int_{t_{0}}^{t}\phi_{s}^{t}\left(x^0_{s}\right)\sigma\left(x^0_{s},s\right)\,dW_{s}\right)\left(\int_{t_{0}}^{t}\phi_{s}^{t}\left(x^0_{s}\right)\sigma\left(x^0_{s},s\right)\,dW_{s}\right)\right] \nonumber\\
&\;=\EX\left[\sum_{i,j,k,l,m}\left(\int_{t_{0}}^{t}(\phi_{s}^{t})_{ij}\sigma_{jk}\,(dW_{s})_{k}\right)\left(\int_{t_{0}}^{t}(\phi_{s}^{t})_{il}\sigma_{lm}\,(dW_{s})_{m}\right)\right]\nonumber\\
&\;=\sum_{i,j,k,l,m}\EX\left[\left(\int_{t_{0}}^{t}(\phi_{s}^{t})_{ij}\sigma_{jk}(\phi_{s}^{t})_{il}\sigma_{lm}\,\left[(dW_{s})_{k},(dW_{s})_{m}\right]\right)\right]
\end{align}
The notation $\left[(dW_{s})_{k},(dW_{s})_{m}\right]$ refers to the quadratic covariation\cite{Kallenberg97} of the processes $(dW_{s})_{k}$ and $(dW_{s})_{m}$. Since the components of the $n$-dimensional Wiener-process are assumed to be independent, we have (by It\^o's isometry),
\begin{align}
\EX& \left(\int_{t_{0}}^{t}(\phi_{s}^{t})_{ij}\sigma_{jk}(\phi_{s}^{t})_{il}\sigma_{lm}\,\left[(dW_{s})_{k},(dW_{s})_{m}\right]\right) \nonumber \\ &=\EX\left(\int_{t_{0}}^{t}(\phi_{s}^{t})_{ij}\sigma_{jk}(\phi_{s}^{t})_{il}\sigma_{lm}\delta_{km}\,ds\right),
\end{align}
where $\delta_{km}$ is the Kronecker-delta.
Denoting the Frobenius-norm by $||\cdot||_{F}:\mathbb{R}^{n\times n}\to\mathrm{\mathbb{R}^{+}}$, we have
$||A||_{F}^{2}=\sum_{i,j}|A_{ij}|^{2}=\text{tr}(A^{T}A)$. Therefore,
\begin{align}
\label{eq:proof6}
    \EX(N_{s}^{2})&=\int_{t_{0}}^{t}||\phi_{s}^{t}\sigma||_{F}^{2}\,ds \nonumber\\
    \varepsilon^{2}\EX\left[N_{s}(t)^{2}\right]&\leq\int_{t_{0}}^{t}\varepsilon^{2}||\phi_{s}^{t}||_{F}^{2}\,ds\max_{s\in[t_{0},t]}\left \Vert\sigma\left(x^0_{s},s\right)\right \Vert_{F}^{2}\nonumber\\
    &=\int_{t_{0}}^{t}\mathrm{tr}\left[C_{s}^{t}\left(x^0_{s}\right)\right]\,ds\,\Delta_{\infty}^{\sigma}(x_{0},t).
\end{align}

In Section III. A, we also concluded in Eq. \eqref{eq:sensitivitybound} that $\delta(x_{0},t)=\varepsilon N_{d}(t)\leq\int_{t_{0}}^{t}\sqrt{\Lambda_{s}^{t}(x_{s})}\,ds\,\Delta_{\infty}(x_{0},t).$
Substituting Eqs.~\eqref{eq:sensitivitybound} and \eqref{eq:proof6} into Eq. \eqref{eq:proof5} implies
\begin{align}
    \varepsilon^{2}\EX\left[N(t)^{2}\right]\leq&\left(\int_{t_{0}}^{t}\sqrt{\Lambda_{s}^{t}\left(x^0_{s}\right)}\,ds\right)^{2}\,\Delta_{\infty}^{2}(x_{0},t)\nonumber\\
    &+\int_{t_{0}}^{t}\mathrm{tr}\left[C_{s}^{t}\left(x^0_{s}\right)\right]\,ds\,\Delta_{\infty}^{\sigma}(x_{0},t),
\end{align}
as claimed.
\subsection{\label{sec:app5}Proof of Theorem 5}
Using the small-noise expansion \eqref{eq:smallnoise} for the mean-squared trajectory uncertainty, for $\varepsilon< \bar{\varepsilon}$, we obtain
\begin{align}
\label{eq:proof7}
    \EX&\left( |x_t^\varepsilon-x^{0}(t)|^2\right) \nonumber\\
    &=\EX \left( |\varepsilon \eta_t +  \varepsilon^2 R(t,\varepsilon)|^2\right). 
\end{align}
Using the Minkowski-inequality, we also find that 
\begin{align}
    &\sqrt{\EX\left( |x_t^\varepsilon-x^{0}(t)|^2\right)} \nonumber\\
    & \leq \sqrt{\varepsilon^2 \EX(|\eta_t|^2)} +  \sqrt{\varepsilon^4\EX(|R(t,\varepsilon)|^2)}.
\end{align}
Since the second order remainder term in Eq. \eqref{eq:smallnoise} is bounded in the mean-squared sense, we have, for some $K_0< \infty$,
\begin{equation}
\label{eq:proof8}
\sup_{t\in[t_0,t_1]}\EX(|R(t,\varepsilon)|^2)\leq K_0^2.
\end{equation}
To bound $\EX(|\eta_t|^2)$, we use Theorem 4 in the form of Eq. \ref{eq:MSdef} to obtain
\begin{equation}
\label{eq:proof9}
    \varepsilon^2 \EX(|\eta_t|^2) \leq \text{MS}_{t_0}^t(x_0,r)\Delta_\infty^2(x_0,t). \\
\end{equation}

Substituting bounds \eqref{eq:proof8} and \eqref{eq:proof9} into the original expression \eqref{eq:proof7}, we have

\begin{equation}
\label{eq:proof10}
    \sqrt{\EX\left( |x_t^\varepsilon-x^{0}(t)|^2\right) }\leq \sqrt{\text{MS}_{t_0}^t(x_0,r)\Delta_\infty^2(x_0,t)} + \sqrt{K_0^2\varepsilon^4} 
\end{equation} 

Since $x_0$ is taken from a compact domain $U\subset \mathbb{R}^n$, we can choose the constant $K_0$ to be independent of $x_0$. After rearranging the terms, we obtain
\begin{align}
    \label{eq:proof11}
    &\sqrt{\EX\left( |x_t^\varepsilon-x^{0}(t)|^2\right)}\nonumber\\
    &\leq\sqrt{\text{MS}_{t_0}^t(x_0,r)}\Delta_\infty(x_0,t) + \Delta_\infty(x_0,t) \frac{K_0\varepsilon^2 }{\Delta_\infty(x_0,t)} \nonumber\\
    &= \left(\sqrt{\text{MS}_{t_0}^t(x_0,r)} +  \frac{K_0\varepsilon^2 }{\Delta_\infty(x_0,t)}\right)\Delta_\infty(x_0,t).
\end{align}
Comparing \eqref{eq:theorem5} to \eqref{eq:proof11}, we obtain the statement of Theorem 5 after setting  
\begin{equation}
    \varepsilon_0:= \min\left\{\max_{\substack{t\in{[t_0,t_1]}\\x_0\in U}}\sqrt{\frac{\Delta_\infty(x_0,t)\delta}{K_0}},\bar{\varepsilon}\right\}.
\end{equation}

\end{document}